\DeclareMathOperator*{\argmin}{arg\,min}
\newtheorem{theorem}{Theorem}
\newtheorem{proposition}{Proposition}
\newtheorem{lemma}{Lemma}
\begin{document}

\begin{frontmatter}

\title{Boolean kernels for collaborative filtering in top-N item recommendation}

\author{Mirko Polato\footnote{Corresponding author. E-mail address: mpolato@math.unipd.it}}
\address{University of Padova - Department of Mathematics\\ Via Trieste, 63, 35121 Padova - Italy}

\author{Fabio Aiolli}
\address{University of Padova - Department of Mathematics\\ Via Trieste, 63, 35121 Padova - Italy}

\begin{abstract}
	In many personalized recommendation problems available data consists only of positive interactions (implicit feedback) between users and items. This problem is also known as One-Class Collaborative Filtering (OC-CF).
	Linear models usually achieve state-of-the-art performances on OC-CF problems and many efforts have been devoted to build more expressive and complex representations able to improve the recommendations.
	Recent analysis show that collaborative filtering (CF) datasets have peculiar characteristics such as high sparsity and a long tailed distribution of the ratings.
	In this paper we propose a boolean kernel, called Disjunctive kernel, which is less expressive than the linear one but it is able to alleviate the sparsity issue in CF contexts.
	The embedding of this kernel is composed by all the combinations of a certain arity $d$ of the input variables, and these combined features are semantically interpreted as disjunctions of the input variables.
	Experiments on several CF datasets show the effectiveness and the efficiency of the proposed kernel.
\end{abstract}

\begin{keyword}
Boolean kernel \sep Kernel methods \sep Recommender systems \sep Collaborative filtering \sep Implicit feedback
\end{keyword}

\end{frontmatter}


\section{Introduction}
Collaborative Filtering (CF) is the \textit{de facto} approach for making personalized recommendation. CF techniques exploit  historical information about the user-item interactions in order to improve future recommendations to users.
User-item interactions can be of two types: explicit or implicit. Explicit feedback is an unambiguous information about how much a user likes/dislikes an item and it is usually represented by a rating (e.g., 1 to 5 stars scale, thumbs up vs. thumbs down). Conversely, implicit feedback is a binary information because it simply means the presence or the absence of an interaction between a user and an item, and it is by its nature ambiguous. 

Even though, in the past, the explicit setting have got most of the attention by the research community, recently the focus is drifting towards the implicit feedback context (also known as One-Class CF problem, OC-CF) because of the following two main reasons: $(i)$ implicit data are much more easy to gather as they do not require any active action by the user, and $(ii)$ they are simply more common.

Unlike the explicit feedback setting where the recommendation task is the rating prediction, in the implicit feedback case the goal is to produce an ordered list of items, where those items that are the most likely to have a future interaction with the user are at the top. In this work we focus on this last type of task which is known as top-N recommendation.

The first developed approaches for OC-CF problems are the neighbourhood-based ones \cite{Desrosiers:2011}. Despite they do not employ any kind of learning, they have been shown to be very effective \cite{Aiolli:2013,Polato-recsys:2016}.
Afterwards, methods employing learning have been proposed, such as \textit{SLIM} \cite{Karypis:2011}, \textit{WRMF} \cite{Koren:2008}, \textit{CF-OMD} \cite{Aiolli:2014,Polato:2016}, and more recently, \textit{LRec} \cite{Sedhain:2016}, and \textit{GLSLIM} \cite{Christakopoulou:2016}.
Both the neighbourhood-based and the learning-based methods mentioned above exploit linear relations between users and/or items. 
The effectiveness of linear models in CF is further underlined in \cite{Bresler:2014} where the authors propose an online linear model and they demonstrate its performance guarantees. 

Recently, see \cite{Polato:2016} and \cite{Polato:2017}, an efficient kernel-based method for OC-CF has been proposed. This method, called \textit{CF-KOMD}, is based on the approach presented in \cite{Aiolli:2014} which showed  better performance than other well known recommendation algorithms such as \textit{SLIM}, \textit{WRMF} and \textit{BPR}. \textit{CF-KOMD} have been tested using different kernels such as linear, polynomial, and Tanimoto kernel, and it has shown state-of-the-art performance on many CF datasets. From the reported results it is possible to notice that the linear kernel achieves very good results despite being the least expressive.

This behaviour is typical in CF datasets because they usually own the following two characteristics \cite{Polato:2017,Zhou:2010,Grcar:2006}: they are very sparse ($\sim 1\%$ density), and the distribution of the interactions over the items and/or over the users are long tailed. For these reasons, for this kind of data, it is not ideal to use more complex and, consequently, even more sparse representations. 

In \cite{Donini:2016}, Donini et al. characterized the notion of expressiveness of a representation (kernel function): more general representations correspond to kernels constructed on simpler features (e.g. single variables, linear kernel), while more specific representations correspond to kernels defined on elaborated features (e.g. product of variables, polynomials). 
Intuitively, less expressive representations tend to emphasize more the similarities between the examples, with the extreme case being the constant kernel matrix in which each example is equal to the others. On the contrary, more expressive kernels highlight more the differences between the examples, with the extreme case being the identity (kernel) matrix in which examples are orthogonal to each other.
Different data can require different levels of expressiveness and, for the considerations expressed above, it is reasonable that CF contexts tend to favor more general representations (e.g., linear).\\

In this paper, we propose a new representation for boolean valued data which is less expressive than the linear one. Specifically, we propose a (boolean) kernel, called Disjunctive kernel (D-Kernel), in which the feature space is composed by all the combinations of the input variables of a given arity $d$, and these combined features are semantically interpreted as logical disjunctions of the input variables.

The underpinning idea behind the proposed D-Kernel is to define higher-level features that are a kind of generalization of the linear ones, so to obtain more general representations that hopefully can alleviate the sparsity issue. It is easy to see that the associated feature space cannot be explicitly defined because of the combinatorial explosion of the number of dimensions. For this, we give an efficient and elegant way to compute the kernel values which does not require an explicit mapping of examples over the feature space.

Besides the kernel definition, we also discuss about its properties and we demonstrate, both theoretically and empirically, that the expressiveness of the D-Kernel is monotonically decreasing with the arity $d$.

For the sake of completeness, and to confirm the ineffectiveness of more complex representations in OC-CF contexts, we also take into consideration an existing kernel that here we call Conjunctive kernel (C-Kernel). The embedding of this kernel is the same as the D-Kernel, that is all possible combinations of $d$ variables but, as the name suggests, the combined input variables are semantically interpreted as logical conjunctions. As opposed to the disjunctive case, we demonstrate that the expressiveness of the C-Kernel is monotonically increasing with the arity $d$.

Finally, we empirically assess the effectiveness and the efficiency of the proposed disjunctive kernel against other boolean kernels over six large CF datasets. Results show that, when using CF-KOMD \cite{Polato:2016} as kernel-based ranker, in almost all the datasets our kernel achieves the best AUC performances.\\

The rest of the paper is organized as follows. In Section \ref{sec:back} we introduce the notation used throughout the paper and the background knowledge needed to fully understand it. Section \ref{sec:bool} presents the existing boolean kernels and how they  relate to our work. Our main contributions are reported in Section \ref{sec:contr}, in particular the Disjunctive kernel is presented in Section \ref{sec:dkernel}. Finally, in Section \ref{sec:exp}, an extensive empirical work involving six different CF datasets is presented and, in Section \ref{sec:concl}, final considerations and future lines of research we can follow are proposed.

\section{Notation and Background} \label{sec:back}
In this section, we present the notation and the background notions useful to fully understand the remainder of this work.\\

Throughout the paper we generally consider learning problems with training examples $\{(\mathbf{x}_1,y_1),...,(\mathbf{x}_m,y_m)\}$ where $x_i \in \mathbb{B}^n$, with $ \mathbb{B} \equiv \{0,1\}$, and $y_i \in \{-1,+1\}$. $\mathbf{X} \in \mathbb{B}^{m\times n}$ denotes the binary matrix where examples are arranged in rows with the corresponding labels vector $\mathbf{y} \in \{-1,1\}^m$.   

A generic entry of a matrix $\mathbf{M}$ is indicated by $\mathbf{M}_{ij}$. 
$\mathbf{1}_n$ represents the column vector of dimension $n$ with all entries equal to 1. When not differently indicated, the norm $\|\cdot\|$ refers to the 2-norm of vectors, while $\|\cdot\|_{F}$ and $\|\cdot\|_{T}$ refer to the Frobenius norm and the trace  (a.k.a. nuclear) matrix norm, respectively.

Since in our experiments we focus on the top-N recommendation problem, we also need to introduce the basic elements of these kind of tasks. We call $\mathcal{U}$ the set of users, such that $|\mathcal{U}|=n$ (i.e., variables), $\mathcal{I}$ the set of items, such that $|\mathcal{I}|=m$ (i.e., examples) and $\mathcal{R} = \{(u,i)\}$ the set of ratings (i.e., labels).

We refer to the binary rating matrix with $\mathbf{R} = \{r_{ui}\} \in \mathbb{B}^{n \times m}$ with $r_{ui} = 1$ if $(u,i)\in \mathcal{R}$, where users are on the rows and items on the columns.
We add a subscription to both user and item sets to indicate, respectively, the set of items rated by a user $u$ ($\mathcal{I}_u$) and the set of users who rated the item $i$ ($\mathcal{U}_i$).
It is worth to notice that users are variables as long as we face the recommendation task using an item-based approach, otherwise, with a user-based method, items and users' roles should be inverted.

In the reminder, when not specified differently, $\mathbf{x}$ and $\mathbf{z}$ are considered boolean vectors of dimension $n$, i.e., $\mathbf{x}, \mathbf{z} \in \mathbb{B}^n$, and $\langle \mathbf{x}, \mathbf{z} \rangle$ stands for their dot product.

\subsection{CF-KOMD}\label{sec:cfkomd}
CF-KOMD is a recent kernel-based algorithm \cite{Polato:2016} for top-N recommendation inspired by preference learning \cite{Aiolli:2005}\cite{Aiolli:2008}, and designed to explicitly maximize the AUC (Area Under the ROC Curve).

This method, even though it is based on kernels, is very efficient, highly scalable and it is very suitable for datasets with few positive and many negative/unlabeled examples (e.g., CF datasets). 
In the following, we give a brief explanation of the algorithm, for further details please refer to \cite{Polato:2016}.

Let $\mathbf{W} \in \mathbb{R}^{n \times k}$ be the embeddings of users ($\mathcal{U}$) in a latent factor space and  $\mathbf{X} \in \mathbb{R}^{k \times m}$ be the embeddings of items ($\mathcal{I}$) in that space.
Given a user $u$, a ranking over items can be induced by the factorization $\hat{\mathbf{R}} = \mathbf{W}\mathbf{X}$, where $\hat{r}_{ui} = \mathbf{w}_u^\top \mathbf{x}_i$ with the constraint $\|\mathbf{w}_u\| = \|\mathbf{x}_i\| = 1$. 

CF-KOMD tries to learn (implicitly) the user representation $\mathbf{w}_u^*$, by solving the optimization problem

\begin{equation}\label{opt3}
	\boldsymbol{\alpha}_{u^+}^* = \argmin\limits_{\boldsymbol{\alpha}_{u^+}} \;\;\boldsymbol{\alpha}_{u^+}^\top \mathbf{K}_{u^+} \boldsymbol{\alpha}_{u^+} + \lambda_p\|\boldsymbol{\alpha}_{u^+}\|^2 - 2\boldsymbol{\alpha}_{u^+}^\top\mathbf{q}_u,
\end{equation}

where $\mathbf{\boldsymbol{\alpha}}_{u^+}$ is a probability distribution over the positive examples (i.e., rated items) for a given user $u$, $\mathbf{K} \in \mathbb{R}^{m \times m}$ is a kernel matrix between items induced by a given kernel function $\kappa$, and the elements of the vector $\mathbf{q}_u \in \mathbb{R}^{|\mathcal{I}_u|}$ are defined by 
$$
\quad q_{ui} = \frac{1}{|\mathcal{I}|} \sum_{j \in \mathcal{I}} \kappa(\mathbf{x}_i, \mathbf{x}_j).
$$ 

\noindent The induced ranking is obtained using the scoring function $$\hat{\mathbf{r}}_{u} = \mathbf{X}^{\top} \mathbf{w}_u^{*} =\mathbf{K}_{u^+:}^{\top} \boldsymbol{\alpha}_{u^+} - \mathbf{q}$$ where $\mathbf{K}_{u^+:} \in \mathbb{R}^{|\mathcal{I}_u|\times |\mathcal{I}|}$ is the matrix obtained by taking the subset of rows corresponding to the positive set of items for the users $u$, and $\mathbf{q} \in \mathbb{R}^{|\mathcal{I}|}$ is like $\mathbf{q}_u$ but defined over the whole set of items. Empirical results presented in \cite{Polato:2016}\cite{Polato:2017} have shown the effectiveness and the efficiency of this method.

\section{Boolean kernels}\label{sec:bool}


Boolean kernels are those kernels which interpret the input vectors as sets of boolean variables and  apply some boolean functions to them. 

If we restrict the input space to the set $\mathbb{B}^n$, then several kernels can be interpreted as boolean kernels since monomials (i.e., product of features) can be seen as conjunctions of positive boolean variables. In the following, we sketch some examples.

The first example is the linear kernel, i.e., $\kappa_{\textit{LIN}}(\mathbf{x}, \mathbf{z}) = \langle \mathbf{x}, \mathbf{z} \rangle$, in which the features correspond to the boolean literals. This kernel simply counts how many active boolean variables the input vectors have in common.

Another instance of a well known boolean kernel is the polynomial kernel \cite{Zhang:2003}\cite{Zhang:2005}, $\kappa_{\textit{POLY}}^d(\mathbf{x}, \mathbf{z}) = (\sigma \langle\mathbf{x},\mathbf{z}\rangle + c)^d, \sigma > 0, c \geq 0$, in which the feature space is represented by all the monomials up to the degree $d$. 
For example, given the degree $d=3$ and $\mathbf{x} \in \mathbb{B}^2$ the embedding would be composed by the features $x_1^3, x_1^2x_2, x_1x_2^2$ and $x_2^3$. It is worth to notice that, being $\mathbf{x} \in \mathbb{B}^2$, $x_1^2x_2$ and $x_1x_2^2$ are actually the same feature $x_1x_2$.

Similar to the polynomial kernel, but with a different combination of features, is the all-subset kernel, $\kappa_{\subseteq}(\mathbf{x},\mathbf{z}) = \prod_{i=1}^n (x_i z_i + 1)$, which considers a space with a feature for each subset of the input variables, including the empty subset. This generates all combination of features but, differently from the polynomial case, each factor in the monomial has degree 1. 

A restriction of the all-subset kernel is represented by the ANOVA kernel ($\kappa_A^d$) \cite{Shawe-Taylor:2004}, in which the embedding space is formed by monomials with a certain degree $d$. For example, given $\mathbf{x} \in \mathbb{B}^3$ the feature space for the all-subset kernel would be made by the features $x_1, x_2, x_3, x_1x_2, x_1x_3, x_2x_3, x_1x_2x_3$ and $\emptyset$, while for the ANOVA of degree 2 it would be formed by $x_1x_2, x_1x_3$ and $x_2x_3$.\\

All the above mentioned kernels are generic kernels that can have a boolean interpretation. However, there is a family of ``pure" boolean kernels thought for learning boolean functions. 
A well known boolean kernel is the Tanimoto Kernel \cite{Ralaivola:2005}, $\kappa_T(\mathbf{x},\mathbf{z}) = \frac{\langle \mathbf{x}, \mathbf{z} \rangle}{\|\mathbf{x}\|^{2} + \|\mathbf{z}\|^{2} - \langle \mathbf{x}, \mathbf{z} \rangle}$, which represents the Jaccard similarity coefficient in binary contexts.
Another important boolean kernel is the DNF (i.e., Disjunctive Normal Form) kernel \cite{Sadohara:2001}, $\kappa_{\textit{DNF}}(\mathbf{x}, \mathbf{z}) = -1+\prod_{i=1}^n (2x_i z_i-x_i-z_i+2)$, which induces a feature space spanned by all possible conjunctions. 
The main difference between the DNF kernel and all-subset kernel's feature space is that the DNF kernel considers also the variables in their negate version, as in the definition of DNF. So, for example, the feature $x_1 \wedge \bar{x_2} \wedge x_3$ is inside the feature space of $\kappa_{\textit{DNF}}$ but not inside the one of $\kappa_\subseteq$.
In \cite{Sadohara:2001} and \cite{KhardonRS:2005}, a variation of the DNF kernel is also considered in which only the positive conjunctions are taken into account and they call it Monotone DNF Kernel (mDNF Kernel), $\kappa_{\textit{mDNF}}(\mathbf{x}, \mathbf{z}) = -1 + \prod_{i=1}^n (x_i z_i + 1)$. We can note that $\kappa_{\textit{mDNF}}$ and $\kappa_\subseteq$ are the same kernel up to the constant -1.

In \cite{Zhang:2003}, Zhang et al. present a generalization of both the DNF and the mDNF kernel in which a parameter $\sigma \in \mathbb{R}_{+}$ is introduced which sets an inductive bias towards simpler ($0 < \sigma < 1$) or more complex ($\sigma > 1$) conjunctions. 

It is worth to notice that the DNF and the mDNF kernels are applicable to vectors with values in $\mathbb{R}^n$ even though they have been thought to learn boolean functions.
If we restrict the input to values in $\mathbb{B}^n$ their definitions can be simplified as follows:
$$
	\kappa_{\textit{DNF}}(\mathbf{x}, \mathbf{z}) = 2^{s(\mathbf{x},\mathbf{z})} - 1, \quad \kappa_{\textit{mDNF}}(\mathbf{x}, \mathbf{z}) = 2^{\langle \mathbf{x}, \mathbf{z} \rangle} - 1,
$$
where $s(\mathbf{x},\mathbf{z})$ counts how many ``bits" $\mathbf{x}$ and $\mathbf{z}$ have in common.

In order to have more control on the feature space, Sadohara et al. \cite{Sadohara:2002} propose a reduced variant of the DNF kernel in which only conjunctions with up to $d$ variables are considered. The $d$-DNF Kernel over binary vectors is defined as 
$$
\kappa_{\textit{DNF}}^d(\mathbf{x},\mathbf{z}) = \sum_{i=1}^d \binom{s(\mathbf{x},\mathbf{z})}{i}.
$$

In \cite{Sadohara:2002} this kernel is used inside the SVM framework for binary classification, in which the idea is to validate the value of $d$ in order to find the right trade-off between specialization and generalization.

In \cite{KowalczykSW:2001} and \cite{Khardon:2003}, boolean kernels are used for studying the learnability of logical formulas, specifically DNF, using maximum margin algorithms, such as SVM. In particular, in \cite{Khardon:2003} the authors show the learning limitations of these kind of kernels inside the PAC (Probably Approximately Correct) framework.

In \cite{Zhang:2005} Zhang et al. propose a Decision Rule Classifier based on boolean kernels which achieves, in many datasets, state-of-the-art results.\\

In this work we present a new boolean kernel, the Disjunctive kernel (D-Kernel), in which the monomials in the feature space are interpreted as disjunctions of boolean variables. Intuitively, the idea is to induce a kernel which is less expressive than the linear one. 

We also consider an existing boolean kernel, that we call Conjunctive kernel (C-Kernel), in which the embedding is the same as the D-Kernel, but the combined input variables are logically interpreted as conjunctions. For both these kernels we present their properties and the analysis of their expressiveness.

In the experimental section we compare these kernels against the linear kernel, the Tanimoto kernel and the mDNF Kernel. In particular, we do not test the DNF kernel because, in CF contexts with implicit feedback, there is no real negative information since the missing feedbacks are ambiguous.

\section{Structured boolean kernel families}\label{sec:contr}

In this section, we present two family of parametric boolean kernels in which the parameter plays a key role on the expressiveness of the kernel.

\subsection{Conjunctive kernel}\label{sec:ckernel}
In this section we consider a particular instantiation of the ANOVA kernel \cite{Shawe-Taylor:2004} where the input vectors are binary. In the ANOVA kernel 
of degree $d$, the feature space induced by the mapping function 
is represented by all monomials of degree $d$ that can be formed using combinations without repetition of the input variables.

As mentioned in Section \ref{sec:bool}, since we are focusing on binary input vectors, a monomial of degree $d$ can be interpreted as a conjunction of boolean variables, e.g., $x_1 x_3 x_7 \equiv x_1 \wedge x_3 \wedge x_7$, assuming 1 as \textit{true} and 0 as \textit{false}.

This boolean interpretation allows to see the ANOVA kernel of degree $d$ between $\mathbf{x}$ and $\mathbf{z}$ as the number of \textit{true} conjunctions of $d$ variables (i.e., literals) that $\mathbf{x}$ and $\mathbf{z}$ have in common.
In other words, the kernel counts how many conjunctions of $d$ input variables can be constructed using the set of active common features of $\mathbf{x}$ and $\mathbf{z}$. We call this kernel Conjunctive kernel (or simply C-Kernel for  brevity). 

Formally, the embedding of the C-Kernel \cite{Khardon:2003} of arity $d$ is given by
$$
	\boldsymbol{\phi}_\wedge^d : \mathbf{x} \mapsto (\boldsymbol{\phi}_\wedge^{(\mathbf{b})}(\mathbf{x}))_{\mathbf{b} \in \mathbb{B}_d},
$$
where $\mathbb{B}_d = \{\mathbf{b} \in \mathbb{B}^n \mid \|\mathbf{b}\|_1 = d\}$, and 
$$
\boldsymbol{\phi}_\wedge^{(\mathbf{b})}(\mathbf{x}) = \prod_{i=1}^n x_i^{b_i} = \mathbf{x}^\mathbf{b},
$$
where the notation $\mathbf{x}^\mathbf{b}$ stands for $x_1^{b_1} x_2^{b_2} \cdots x_n^{b_n}$.

The dimension of the resulting embedding is $\binom{n}{d}$, that is the number of possible combinations of $d$ different variables, while the resulting kernel $(\kappa_\wedge^d)$ is given by
$$
	\kappa_\wedge^d(\mathbf{x}, \mathbf{z}) = \langle \boldsymbol{\phi}_\wedge^d(\mathbf{x}), \boldsymbol{\phi}_\wedge^d(\mathbf{z}) \rangle = \sum_{\mathbf{b} \in \mathbb{B}_d} \mathbf{x}^{\mathbf{b}}\mathbf{z}^{\mathbf{b}} = \binom{\langle \mathbf{x},\mathbf{z}\rangle}{d}.
$$
This kernel is part of the results obtained in \cite{KowalczykSW:2001,KhardonRS:2005}. In both works, however, the proposed kernel is a linear combination of $\kappa_\wedge^j$ for $0 \leq j \leq d$.\\

The C-Kernel owns some peculiarities that are worth to mention:
\begin{itemize}
	\item the sparsity of the kernel increases with the increasing of the arity $d$. This is due to the fact that the number of active conjunctions decreases as the number of involved variables increases (see Figure \ref{fig:cgraph});
	\item the dimension of the feature space is not monotonically increasing with $d$, in fact, the dimension reaches its maximum when  $d = \lfloor \frac{n}{2} \rfloor$ and then starts to decrease;
	\item when $d=1$ the obtained kernel is equal to the linear one
$$
	\kappa_\wedge^1(\mathbf{x}, \mathbf{z}) = \binom{\langle \mathbf{x},\mathbf{z}\rangle}{1} = \langle \mathbf{x},\mathbf{z}\rangle = \kappa_{\textit{LIN}}(\mathbf{x}, \mathbf{z});
$$
	\item anytime an example $\mathbf{x}$ has a number of active variables smaller than the arity $d$, it will be mapped into the null vector in  feature space. From a computational stand point, this can cause issues, e.g. whenever the induced kernel has to be normalized;
	\item the computation of this kernel is very efficient.

	\begin{proposition}\label{prop:time_ck}
		Time complexity of $\kappa_\wedge^d$ computation is $\mathcal{O}(n+d)$.
	\end{proposition}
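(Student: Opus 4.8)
The plan is to observe that the evaluation of $\kappa_\wedge^d(\mathbf{x}, \mathbf{z}) = \binom{\langle \mathbf{x}, \mathbf{z} \rangle}{d}$ splits cleanly into two independent phases: first computing the inner product $s := \langle \mathbf{x}, \mathbf{z} \rangle$, and then evaluating the binomial coefficient $\binom{s}{d}$. I would bound the cost of each phase separately and then add them.

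For the first phase, since $\mathbf{x}, \mathbf{z} \in \mathbb{B}^n$, the inner product $s = \sum_{i=1}^n x_i z_i$ is a sum of $n$ terms, each requiring one multiplication and one addition. Under the standard uniform (unit-cost) arithmetic model this is $\mathcal{O}(n)$; I would note in passing that over boolean vectors this is simply a count of common active bits, so even a bitwise \texttt{AND} followed by a popcount would do, but the naive sum already suffices for the claimed bound.

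For the second phase, I would avoid computing the factorials in the closed form $\binom{s}{d} = s!/(d!\,(s-d)!)$ directly (which would involve numbers of size $s$) and instead use the multiplicative/iterative formula $\binom{s}{d} = \prod_{i=0}^{d-1}\frac{s-i}{d-i}$. Evaluating this requires $d$ iterations, each performing a constant number of arithmetic operations, giving $\mathcal{O}(d)$. I would also flag the edge case $s < d$, where the coefficient is $0$ and the computation can be short-circuited, so this bound is an upper bound in all cases.

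Summing the two phases yields a total cost of $\mathcal{O}(n) + \mathcal{O}(d) = \mathcal{O}(n+d)$, which is the claim. The statement is essentially a direct consequence of decomposing the evaluation, so there is no genuine obstacle; the only point requiring care is being explicit about the cost model — the bound relies on treating each arithmetic operation as $\mathcal{O}(1)$, and the $\mathcal{O}(d)$ term in particular depends on using the iterative product form rather than a naive factorial computation, which would otherwise inflate the count.
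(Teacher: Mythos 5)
Your proposal is correct and follows essentially the same argument as the paper: compute $\langle \mathbf{x}, \mathbf{z} \rangle$ in $\mathcal{O}(n)$, then evaluate the binomial coefficient via the multiplicative product formula in $\mathcal{O}(d)$ (your $\prod_{i=0}^{d-1}\frac{s-i}{d-i}$ is the same device as the paper's $\prod_{i=0}^{d-1}\frac{q-i}{i+1}$), and sum the two costs. Your added remarks on the cost model and the $s < d$ edge case are sensible but do not change the substance.
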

	\begin{proof}
		The complexity of the dot product $\langle \mathbf{x}, \mathbf{z} \rangle$ is linear in $n$, so $\mathcal{O}(n)$. For any $q \in \mathbb{N}$, the binomial coefficient $\binom{q}{d}$ can be calculated by $ \prod_{i=0}^{d-1} \frac{q-i}{i+1}$ and hence it has a complexity $\mathcal{O}(d)$. Therefore, the overall complexity is $\mathcal{O}(n+d)$.
	\end{proof}
	In practice, $d$ is order of magnitude smaller than $n$ (i.e., $d \ll n$) so the complexity of $k_\wedge^d$ is actually $\mathcal{O}(n)$.
\end{itemize}

\subsubsection{Connection with the Monotone DNF kernel}

As said in Section \ref{sec:bool}, both works \cite{KhardonRS:2005} and \cite{Sadohara:2001} independently proposed the Monotone DNF (Disjunctive Normal Form) kernel (mDNF Kernel), in which the feature space is represented by all conjunctions of positive boolean literals (i.e., variables). 

This kernel is applicable to points in $\mathbb{R}^n$ but if we restrict to the case in which the input vectors are in $\mathbb{B}^n$ then it can be computed as follows:
$$
	\kappa_{\textit{mDNF}}(\mathbf{x}, \mathbf{z}) = -1 + \prod_{i=1}^n (x_i z_i + 1) = 2^{\langle \mathbf{x}, \mathbf{z} \rangle} -1.
$$

Actually, the mDNF kernel counts how many (any degree) conjunctions of variables $\mathbf{x}$ and $\mathbf{z}$ have in common, and this is strictly related to what the C-Kernel does. In fact, we can express the mDNF Kernel as a linear combination of C-Kernels of arity $1 \leq d \leq n$ as in the following:
$$
\kappa_{\textit{mDNF}}(\mathbf{x}, \mathbf{z}) = 2^{\langle \mathbf{x}, \mathbf{z} \rangle} -1 = \sum_{d=0}^n \binom{\langle \mathbf{x}, \mathbf{z} \rangle}{d} - 1 = \sum_{d=1}^n \kappa_\wedge^d(\mathbf{x}, \mathbf{z}).
$$

\subsection{Disjunctive kernel}\label{sec:dkernel}

The embedding of the Disjunctive kernel (D-Kernel) is the same as the C-Kernel, but the logical interpretation is different. In the D-Kernel, as the name suggests, the combinations of variables inside the feature space represent disjunctions of variables, e.g., $x_1 x_2 x_5 \equiv x_1 \vee x_2 \vee x_5$.\\
 
Formally, the embedding of the D-Kernel of arity $d$ is given by
$$
	\boldsymbol{\phi}_\vee^d : \mathbf{x} \mapsto (\boldsymbol{\phi}_\vee^{(\mathbf{b})}(\mathbf{x}))_{\mathbf{b} \in \mathbb{B}_d},
$$
with 
$$
\boldsymbol{\phi}_\vee^{(\mathbf{b})}(\mathbf{x}) = H(\langle \mathbf{x}, \mathbf{b} \rangle) = H\left( \sum_{i=1}^n x_i{b_i} \right), 
$$
where $H : \mathbb{R} \rightarrow \mathbb{B}$ is the Heaviside step function defined as
$$
	H(n) = 
\begin{cases}
    1  & \quad \text{if } n > 0\\
    0  & \quad \text{if } n \leq 0\\
  \end{cases}.
$$

The dimension of the resulting embedding is $n\choose{d}$. 
From a logical point of view, the D-Kernel of arity $d$ between $\mathbf{x}$ and $\mathbf{z}$, is the number of \textit{true} disjuctions of $d$ variables that $\mathbf{x}$ and $\mathbf{z}$ have in common.
To be \textit{true}, a disjunction needs only one active literal, or conversely, to be \textit{false} all of them have to be inactive (i.e., \textit{false}). 

In order to ease the understanding of how a D-Kernel $(\kappa_\vee^d)$ is calculated, we rely on a set theory interpretation. With a slight abuse of notation let $\mathbf{x}$ and $\mathbf{z}$ be the sets containing the active variables of the respective vectors, and $\mathbf{f}$ be the set of all the variables. It will be clear from the context when $\mathbf{x}$ and $\mathbf{z}$ are treated as binary vectors or as sets.

Let denote $N_d(\mathbf{x}) = \binom{|\mathbf{x}|}{d}$, the number of combinations of size $d$ that can be formed using the elements of the set $\mathbf{x}$, hence $N_d(\mathbf{f}) = \binom{n}{d}$.

To get the value $\kappa_\vee^d(\mathbf{x}, \mathbf{z})$ we need to count all the $d$-disjunctions that contain at least an active variable of $\mathbf{x}$ and $\mathbf{z}$ (note that this variables can be the same). An easier way to get this counting is to face the problem in a negative fashion, that is, we subtract from all the possible $d$-combinations the ones with no active elements from both $\mathbf{x}$ and $\mathbf{z}$.

Formally:
\begin{align*}
	\kappa_\vee^d(\mathbf{x}, \mathbf{z}) &= \langle \boldsymbol{\phi}_\vee^d(\mathbf{x}), \boldsymbol{\phi}_\vee^d(\mathbf{z}) \rangle = N_d(\mathbf{f}) - N_d(\overline{\mathbf{x}}) - N_d(\overline{\mathbf{z}}) + N_d(\overline{\mathbf{x} \cup \mathbf{z}}) \\
	&= \binom{n}{d} - \binom{n - |\mathbf{x}|}{d} - \binom{n - |\mathbf{z}|}{d} + \binom{n - |\mathbf{x}| - |\mathbf{z}| + |\mathbf{x} \cap \mathbf{z}|}{d}.
\end{align*}

where the last term (i.e., $N_d(\overline{\mathbf{x} \cup \mathbf{z}})$) is needed since $- N_d(\overline{\mathbf{x}}) - N_d(\overline{\mathbf{z}})$ removes twice the combinations which contain active variables from neither $\mathbf{x}$ nor $\mathbf{z}$. 

Figure \ref{fig:set-dkernel} gives an illustration of how the kernel is calculated. Fig. \ref{fig:set-dkernel:a} represents all the possible combinations while Fig. \ref{fig:set-dkernel:b} and Fig. \ref{fig:set-dkernel:c} show the set of elements used to create the combinations that are discarded from the counting. Finally, Fig. \ref{fig:set-dkernel:d} illustrates the set of variables that are used to re-add the combinations that have been discarded twice.\\

As for the C-Kernel, when $d=1$ the D-Kernel is equal to the linear one
\begin{align*}
	\kappa^1_\vee(\mathbf{x}, \mathbf{z}) &= \binom{n}{1} - \binom{n - |\mathbf{x}|}{1} - \binom{n - |\mathbf{z}|}{1} + \binom{n - |\mathbf{x}| - |\mathbf{z}| + |\mathbf{x} \cap \mathbf{z}|}{1}\\
	&= n - (n - |\mathbf{x}|) - (n - |\mathbf{z}|) + (n - |\mathbf{x}| - |\mathbf{z}| + |\mathbf{x} \cap \mathbf{z}|)\\
	&= |\mathbf{x} \cap \mathbf{z}| := \langle \mathbf{x}, \mathbf{z} \rangle = \kappa_{\textit{LIN}}(\mathbf{x},\mathbf{z}).
\end{align*}

The D-Kernel has a special characteristic regarding its sparsity. In the following remark we prove that for any arity $d \geq 2$ the resulting kernel matrix is fully dense.\\

\noindent \textbf{Remark} For any dataset $\mathbf{X}$ and for any $d \geq 2$  the density of the D-Kernel matrix of degree $d$ induced by $\mathbf{X}$ (i.e., $\mathbf{K}_\mathbf{X}^{\vee_d}$) is $100\%$. It is easy to see that given two generic examples $\mathbf{x}$, $\mathbf{z}$ from the dataset $\mathbf{X}$ such that $\exists p,q \;|\; x_{p} = 1 \wedge z_{q} = 1$, then, for $d \geq 2$, the feature space induced by $\boldsymbol{\phi}_\vee^d$ has at least one feature which contains the \textit{or} between the \textit{p}-th and the \textit{q}-th variables of the vectors which is active in both  $\boldsymbol{\phi}_\vee^d(\mathbf{x})$ and $\boldsymbol{\phi}_\vee^d(\mathbf{z})$ and hence the kernel $\kappa_\vee^d(\mathbf{x},\mathbf{z}) \geq 1$. 

\begin{figure}[h!]
\centering     
\subfigure[$N_d(\mathbf{f})$]{\label{fig:set-dkernel:a}
\begin{tikzpicture}[pattern=north east lines, pattern color=blue]
	\scope
	\clip (-1.8,-1.5) rectangle (2.8,1.5);
	\fill (-1.8,-1.5) rectangle (2.8,1.5);
	\endscope
	\draw (0,0) ellipse (1.2 and 1) (-0.6,0.9)  node [text=black,above] {$\mathbf{x}$}
	      (1.1,0) ellipse (1 and 0.7) (1.6,0.6)  node [text=black,above] {$\mathbf{z}$}
	      (-1.8,-1.5) rectangle (2.8,1.5) node [text=black,right] {$\mathbf{f}$};
	\end{tikzpicture}
}
\subfigure[$-N_d(\mathbf{\overline{\mathbf{x}}})$]{\label{fig:set-dkernel:b}
\begin{tikzpicture}[pattern=north west lines, pattern color=red]
	\scope
	\clip (-1.8,-1.5) rectangle (2.8,1.5);
	\fill (-1.8,-1.5) rectangle (2.8,1.5);
	\endscope
	\scope
	\clip (0,0) ellipse (1.2 and 1);
	\fill[white] (0,0) ellipse (1.2 and 1);
	\endscope
	\draw (0,0) ellipse (1.2 and 1) (-0.6,0.9)  node [text=black,above] {$\mathbf{x}$}
	      (1.1,0) ellipse (1 and 0.7) (1.6,0.6)  node [text=black,above] {$\mathbf{z}$}
	      (-1.8,-1.5) rectangle (2.8,1.5) node [text=black,right] {$\mathbf{f}$};
	\end{tikzpicture}
}
\subfigure[$-N_d(\overline{\mathbf{z}})$]{\label{fig:set-dkernel:c}
\begin{tikzpicture}[pattern=north west lines, pattern color=red]
	\scope
	\clip (-1.8,-1.5) rectangle (2.8,1.5);
	\fill (-1.8,-1.5) rectangle (2.8,1.5);
	\endscope
	\scope
	\clip (1.1,0) ellipse (1 and 0.7);
	\fill[white] (1.1,0) ellipse (1 and 0.7);
	\endscope
	\draw (0,0) ellipse (1.2 and 1) (-0.6,0.9)  node [text=black,above] {$\mathbf{x}$}
	      (1.1,0) ellipse (1 and 0.7) (1.6,0.6)  node [text=black,above] {$\mathbf{z}$}
	      (-1.8,-1.5) rectangle (2.8,1.5) node [text=black,right] {$\mathbf{f}$};
	\end{tikzpicture}
}
\subfigure[$N_d(\overline{\mathbf{x}\cup\mathbf{z}})$]{\label{fig:set-dkernel:d}
\begin{tikzpicture}[pattern=north east lines, pattern color=blue]
	\scope
	\clip (-1.8,-1.5) rectangle (2.8,1.5);
	\fill (-1.8,-1.5) rectangle (2.8,1.5);
	\endscope
	\scope
	\clip (0,0) ellipse (1.2 and 1)
		  (1.1,0) ellipse (1 and 0.7);
	\fill[white] (0,0) ellipse (1.2 and 1)
				 (1.1,0) ellipse (1 and 0.7);
	\endscope
	\draw (0,0) ellipse (1.2 and 1) (-0.6,0.9)  node [text=black,above] {$\mathbf{x}$}
	      (1.1,0) ellipse (1 and 0.7) (1.6,0.6)  node [text=black,above] {$\mathbf{z}$}
	      (-1.8,-1.5) rectangle (2.8,1.5) node [text=black,right] {$\mathbf{f}$};
	\end{tikzpicture}
}
\caption{The blue striped sections, (a) and (d), show the set of elements used to create the \textit{active} combinations,  while the red ones, (b) and (c), are the combinations that are discarded from the counting.\label{fig:set-dkernel}}
\end{figure}
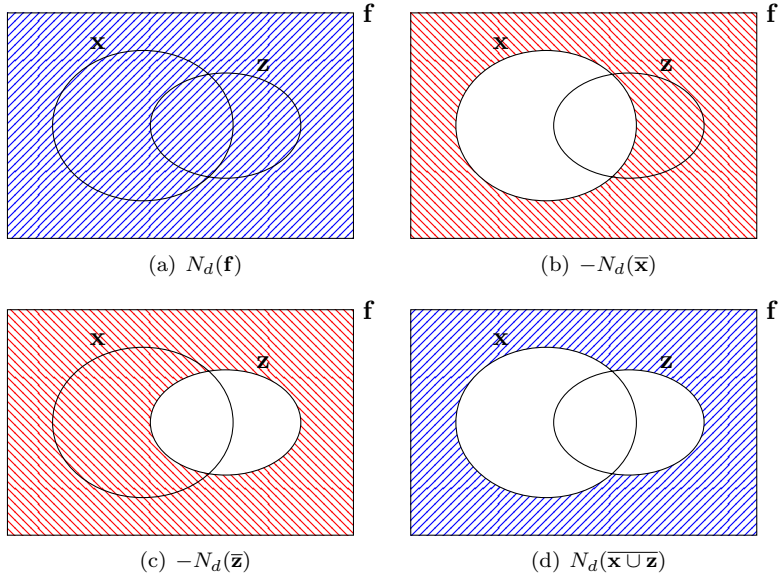

From a computational point of view, the D-Kernel has the same complexity as the C-Kernel.
\begin{proposition}
	Time complexity of $\kappa_\vee^d$ computation is $\mathcal{O}(n+d)$.
\end{proposition}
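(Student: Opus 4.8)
The plan is to reuse the closed-form expression for the D-Kernel derived above, namely
$$
\kappa_\vee^d(\mathbf{x}, \mathbf{z}) = \binom{n}{d} - \binom{n - |\mathbf{x}|}{d} - \binom{n - |\mathbf{z}|}{d} + \binom{n - |\mathbf{x}| - |\mathbf{z}| + |\mathbf{x} \cap \mathbf{z}|}{d},
$$
and to account separately for the cost of the set-cardinality quantities appearing as arguments and for the cost of evaluating the binomial coefficients themselves. The whole argument is structurally parallel to the proof of Proposition \ref{prop:time_ck}, so I would keep the two contributions cleanly separated and then add them.

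First I would observe that the only input-dependent quantities in this formula are $|\mathbf{x}|$, $|\mathbf{z}|$, and $|\mathbf{x} \cap \mathbf{z}|$, since $n$ and $d$ are fixed parameters. Each of these can be obtained by a single linear scan over the $n$ coordinates: $|\mathbf{x}| = \langle \mathbf{x}, \mathbf{x} \rangle$ and $|\mathbf{z}| = \langle \mathbf{z}, \mathbf{z} \rangle$ simply count the active bits, while $|\mathbf{x} \cap \mathbf{z}| = \langle \mathbf{x}, \mathbf{z} \rangle$, and all three dot products are computable in $\mathcal{O}(n)$ time. Once these values are available, the four integer arguments of the binomials follow by $\mathcal{O}(1)$ arithmetic.

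Next I would bound the cost of the binomial coefficients. As in the proof of Proposition \ref{prop:time_ck}, for any $q \in \mathbb{N}$ the value $\binom{q}{d}$ can be computed via the product $\prod_{i=0}^{d-1} \frac{q-i}{i+1}$ in $\mathcal{O}(d)$ time. The formula involves exactly four such coefficients, a constant number, so their combined cost is $4 \cdot \mathcal{O}(d) = \mathcal{O}(d)$, and the final additions and subtractions are $\mathcal{O}(1)$. Summing the two contributions then yields $\mathcal{O}(n) + \mathcal{O}(d) = \mathcal{O}(n+d)$, which is the claimed bound.

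I do not expect any genuine obstacle here: the argument is essentially identical to that of Proposition \ref{prop:time_ck}, the only differences being that the D-Kernel requires one extra cardinality ($|\mathbf{x} \cap \mathbf{z}|$, already folded into the same linear pass as the dot product) and four binomial evaluations instead of one. The only point warranting care is simply confirming that the number of binomial evaluations is a fixed constant, independent of $n$ and $d$, so that the $\mathcal{O}(d)$ term is not inflated; once that is noted, the complexity follows immediately.
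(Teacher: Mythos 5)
Your proposal is correct and follows essentially the same route as the paper's own proof: both reduce the cost to computing the cardinalities $|\mathbf{x}|$, $|\mathbf{z}|$ and $|\mathbf{x} \cap \mathbf{z}|$ as dot products in $\mathcal{O}(n)$, and then reuse the $\mathcal{O}(d)$ product formula for $\binom{q}{d}$ from Proposition \ref{prop:time_ck} for the constant number of binomial coefficients, giving $\mathcal{O}(n+d)$ overall. Your bookkeeping (cardinalities computed once, then four $\mathcal{O}(d)$ binomial evaluations) is if anything slightly cleaner than the paper's, which charges $\mathcal{O}(n+d)$ to each binomial after the first, but the argument is the same.
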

\begin{proof}
	As said in the proof of the Proposition \ref{prop:time_ck}, the complexity of $\binom{q}{d}$ computation is $O(d)$, while the complexity of $|\mathbf{x}|:=\|\mathbf{x}\|^2$, $|\mathbf{z}|:=\|\mathbf{z}\|^2$ and $|\mathbf{x} \cap \mathbf{z}| := \langle \mathbf{x}, \mathbf{z} \rangle$ is $\mathcal{O}(n)$.
	So, with the exception of the first binomial which has a complexity of $\mathcal{O}(d)$, all the binomials in $\kappa_\vee^d(\mathbf{x}, \mathbf{z})$ has a complexity of $\mathcal{O}(n+d)$ and hence the overall complexity is equals to $\mathcal{O}(n+d)$.
\end{proof}
Likewise the C-Kernel, in general, the arity $d$ is order of magnitude smaller than $n$ (i.e., $d \ll n$) and hence the complexity of $\kappa_\vee^d$ can be reduced to $\mathcal{O}(n)$.

To further speed up the computation of the D-Kernel matrix, we implemented a caching mechanism in order to avoid the recalculation of the same binomial coefficients. This trick led to a huge improvement in terms of computational time (see Section \ref{sec:time} for more details).

\subsection{Expressiveness of D-Kernel and C-Kernel}
In this section, we show some properties about the expressiveness of the above mentioned kernels. Firstly, we need to define the concept of expressiveness  and how it can be estimated.\\

The expressiveness of a kernel function can be roughly thought as the number of dichotomies that can be realized by a linear separator in that feature space. 
As discussed in \cite{Donini:2016}, the kernel's expressiveness can be captured by its rank since, it can be proved that,  let $\mathbf{K} \in \mathbb{R}^{m \times m}$ be a kernel matrix over a set of $m$ examples, and let $\textit{rank}(\mathbf{K})$ be its rank, then there exists at least one subset of examples of size $\textit{rank}(\mathbf{K})$ that can be shattered by a linear function.


In the same work the authors propose a novel measure, namely the \textit{spectral ratio}, which is easy to compute and it is demonstrated to be strictly related to the rank of the kernel matrix. 

In particular, the \textit{spectral ratio} is defined as: 
$$
	\mathcal{C}(\mathbf{K}) = \frac{\|\mathbf{K}\|_T}{\|\mathbf{K}\|_F} = \frac{\sum_i \mathbf{K}_{ii}}{\sqrt{\sum_{ij} \mathbf{K}_{ij}^2}}.
$$
We will use this measure to assess the complexity of the C-Kernel and the D-Kernel. Using the definition given in \cite{Donini:2016}, we say that a kernel function $\kappa_i$ is more general than another kernel function $\kappa_j$ when $\mathcal{C}(\mathbf{K}^{(i)}_{\mathbf{X}}) \leq \mathcal{C}(\mathbf{K}^{(j)}_\mathbf{X})$ for any possible dataset $\mathbf{X}$, and we use the notation  $\kappa_i \geq_G \kappa_j$.

In the experimental section we compare the kernels' expressiveness using the equivalent normalized version of the spectral ratio, that is:
$$
	\bar{\mathcal{C}} = \frac{\mathcal{C}(\mathbf{K})-1}{\sqrt{m}-1} \in [0,1].
$$

In the reminder of this section we will refer to the normalized version of a kernel function $\kappa$ with $\tilde{\kappa}$, which is defined as: 
$$
	\tilde{\kappa}(\mathbf{x},\mathbf{z})=\frac{\kappa(\mathbf{x},\mathbf{z})}{\sqrt{\kappa(\mathbf{x}, \mathbf{x}) \kappa(\mathbf{z}, \mathbf{z})}}.
$$

The use of normalized kernels leads to a nice property about the expressiveness of two kernel functions. Formally, let us define the following lemma.

\begin{lemma}
	\label{lemma}
	Let be given $\tilde{\kappa}$ and $\tilde{\kappa'}$, two normalized kernel functions, and the corresponding kernel matrices $\tilde{\mathbf{K}}_\mathbf{X}$, $\tilde{\mathbf{K}}'_\mathbf{X} \in \mathbb{R}^{m \times m}$ associated to a dataset $\mathbf{X}$ with $m$ examples. If, for any $\mathbf{x}_i, \mathbf{x}_j \in \mathbb{R}^m$, $\tilde{\kappa}(\mathbf{x}_i,\mathbf{x}_j)^2 \geq \tilde{\kappa}'(\mathbf{x}_i,\mathbf{x}_j)^2$ holds, then $\mathcal{C}(\tilde{\mathbf{K}}_\mathbf{X}) \leq \mathcal{C}(\tilde{\mathbf{K}}'_\mathbf{X})$ .
\end{lemma}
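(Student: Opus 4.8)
The plan is to exploit the fact that normalization fixes the diagonal of the kernel matrix, which in turn fixes the numerator of the spectral ratio. First I would observe that for any normalized kernel $\tilde{\kappa}$ the diagonal entries satisfy $\tilde{\kappa}(\mathbf{x}_i,\mathbf{x}_i) = \kappa(\mathbf{x}_i,\mathbf{x}_i)/\sqrt{\kappa(\mathbf{x}_i,\mathbf{x}_i)^2} = 1$. Hence the diagonal of $\tilde{\mathbf{K}}_\mathbf{X}$ is all ones, so $\sum_i (\tilde{\mathbf{K}}_\mathbf{X})_{ii} = m$, and the same holds for $\tilde{\mathbf{K}}'_\mathbf{X}$. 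Using the definition of the spectral ratio given above, both numerators equal $m$, and therefore
$$
\mathcal{C}(\tilde{\mathbf{K}}_\mathbf{X}) = \frac{m}{\|\tilde{\mathbf{K}}_\mathbf{X}\|_F}, \qquad \mathcal{C}(\tilde{\mathbf{K}}'_\mathbf{X}) = \frac{m}{\|\tilde{\mathbf{K}}'_\mathbf{X}\|_F}.
$$

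The second step would be to translate the pointwise hypothesis into an inequality between Frobenius norms. Since $\tilde{\kappa}(\mathbf{x}_i,\mathbf{x}_j)^2 \geq \tilde{\kappa}'(\mathbf{x}_i,\mathbf{x}_j)^2$ holds for every pair of indices, i.e. $(\tilde{\mathbf{K}}_\mathbf{X})_{ij}^2 \geq (\tilde{\mathbf{K}}'_\mathbf{X})_{ij}^2$ entrywise, summing over all $i,j$ gives $\sum_{ij}(\tilde{\mathbf{K}}_\mathbf{X})_{ij}^2 \geq \sum_{ij}(\tilde{\mathbf{K}}'_\mathbf{X})_{ij}^2$, that is $\|\tilde{\mathbf{K}}_\mathbf{X}\|_F \geq \|\tilde{\mathbf{K}}'_\mathbf{X}\|_F$. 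Because both spectral ratios share the same positive numerator $m$ and the map $t \mapsto m/t$ is decreasing in $t > 0$, a larger Frobenius norm yields a smaller ratio, whence $\mathcal{C}(\tilde{\mathbf{K}}_\mathbf{X}) \leq \mathcal{C}(\tilde{\mathbf{K}}'_\mathbf{X})$, which is exactly the claim.

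There is essentially no hard analytic step here; the entire argument is a short chain of inequalities. The only point requiring a moment of care is the identification of the trace (nuclear) norm $\|\mathbf{K}\|_T$ with the sum of diagonal entries $\sum_i \mathbf{K}_{ii}$, which is valid precisely because kernel matrices are positive semidefinite and hence have eigenvalues equal to their singular values; since the paper already writes the spectral ratio in this diagonal form, I would simply invoke that formula. The genuine conceptual key — and the reason the statement assumes \emph{normalized} kernels — is that normalization pins the numerator to the constant $m$, thereby reducing the expressiveness comparison entirely to a comparison of Frobenius norms, which the squared-entry hypothesis controls directly.
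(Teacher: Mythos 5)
Your proof is correct and follows essentially the same route as the paper's: both sum the pointwise inequality $\tilde{\kappa}(\mathbf{x}_i,\mathbf{x}_j)^2 \geq \tilde{\kappa}'(\mathbf{x}_i,\mathbf{x}_j)^2$ to compare Frobenius norms, note that normalization fixes both trace norms to $m$, and conclude by monotonicity of $t \mapsto m/t$. Your additional remark justifying $\|\mathbf{K}\|_T = \sum_i \mathbf{K}_{ii}$ via positive semidefiniteness is a small point of care the paper leaves implicit, but otherwise the arguments coincide.
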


\begin{proof}\label{pf:lemma}
	Assume that $\tilde{\kappa}(\mathbf{x}_i,\mathbf{x}_j)^2 \geq \tilde{\kappa}'(\mathbf{x}_i,\mathbf{x}_j)^2$ holds for any $\mathbf{x}_i$, $\mathbf{x}_j$.
	
	Then, we can say that the following inequality is also true:
	\begin{equation}\label{ineq}
	\|\tilde{\mathbf{K}}_\mathbf{X}\|_F^2 = \sum_{i=1}^{m}\sum_{j=1}^{m} \tilde{\kappa}(\mathbf{x}_i,\mathbf{x}_j)^2 \geq \sum_{i=1}^{m}\sum_{j=1}^{m} \tilde{\kappa}'(\mathbf{x}_i,\mathbf{x}_j)^2 = \|\tilde{\mathbf{K}}'_\mathbf{X}\|_F^2.
	\end{equation}
	
	Since the kernels are normalized, $\|\tilde{\mathbf{K}}_\mathbf{X}\|_T=\|\tilde{\mathbf{K}}'_\mathbf{X}\|_T=m$ we have that:
	$$
	\mathcal{C}(\tilde{\mathbf{K}}_\mathbf{X}) = \frac{\|\tilde{\mathbf{K}}_\mathbf{X}\|_T}{\|\tilde{\mathbf{K}}_\mathbf{X}\|_F} = \frac{m}{\|\tilde{\mathbf{K}}_\mathbf{X}\|_F} \leq \frac{m}{\|\tilde{\mathbf{K}}'_\mathbf{X}\|_F} = \frac{\|\tilde{\mathbf{K}}'_\mathbf{X}\|_T}{\|\tilde{\mathbf{K}}'_\mathbf{X}\|_F} = \mathcal{C}(\tilde{\mathbf{K}}'_\mathbf{X}).
	$$
\end{proof}

\subsubsection{C-Kernel's expressiveness}\label{sec:exp-ck}
By construction, the features of a C-Kernel of arity $d$ have a clear dependence with the features of the same kernel of degree $d-1$.
In fact, consider the feature $x_1 x_2 x_3$ (i.e., $x_1\wedge x_2\wedge x_3$) of degree 3, its value is strictly connected to the ones of the features $x_1 x_2$, $x_1 x_3$ and $x_2 x_3$ of degree 2. 
In particular, the conjunctive feature $x_1 x_2 x_3$ will be active if and only if all the sub-features $x_1 x_2$, $x_1 x_3$ and $x_2 x_3$ are active, which, in turn, are active if and only if $x_1, x_2$ and $x_3$ are active too. Figure \ref{fig:cgraph} gives a visual idea of these dependencies. 

\begin{figure}[h]
\tikzset{elliptic state/.style={draw,ellipse}}
\tikzstyle{stuff_fill}=[preaction={fill=black!20}]
\centering
\begin{tikzpicture}[auto]
    \node[elliptic state, stuff_fill] (x1)                {$x_1$};
    \node[elliptic state, stuff_fill] (x2) [right =2cm of x1] {$x_2$};
    \node[elliptic state] (x3) [right =2cm of x2] {$x_3$};
    \node[elliptic state, stuff_fill] (x12) [below left =1cm and -.2cm of x1] {$x_1x_2:=x_1 \wedge x_2$};
    \node[elliptic state] (x13) [below =.85cm of x2] {$x_1x_3:=x_1 \wedge x_3$};
    \node[elliptic state] (x23) [below right=1cm and -.2cm of x3] {$x_2x_3:=x_2 \wedge x_3$};
    \node[elliptic state] (x123) [below =1cm of x13] {$x_1x_2x_3 := x_1 \wedge x_2 \wedge x_3$};
    \node (xml) [left =.1cm of x12]{};
    \node (xmr) [right =.1cm of x23]{};
    \node (t) [above =1.4cm of xml]{d=1};
	\path[->](x1) edge node [align=center]{$$} (x12) 
             (x1) edge node [align=center]{$$} (x13)
             (x2) edge node [align=center]{$$} (x12)
             (x2) edge node [align=center]{$$} (x23)
             (x3) edge node [align=center]{$$} (x13)
             (x3) edge node [align=center]{$$} (x23)
             (x12) edge node [align=center]{$$} (x123)
             (x13) edge node [align=center]{$$} (x123)
             (x23) edge node [align=center]{$$} (x123);
   \path (x123) -- coordinate (l32) (xml) -- coordinate (l21) (x1);
   \draw[dashed](xml |- l32) node[below] {d=3} -- (l32 -| xmr);
   \draw[dashed] (xml |- l21) node[below] {d=2}-- (l21 -| xmr);
\end{tikzpicture}
\caption{Dependencies between different arities of the C-Kernel. The nodes are features and the colored ones are active features. \label{fig:cgraph}}
\end{figure}
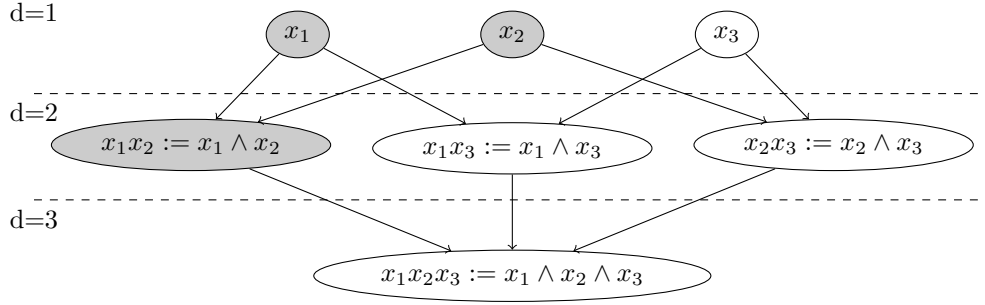

Intuitively, we expect that the higher the order of the C-Kernel, the higher the degree of sparsity of the corresponding kernel matrix.
We will prove this is true (for the normalized kernel) and we also show that the arity $d$ induces an order of expressiveness in the kernel functions $\kappa_\wedge^d(\mathbf{x}, \mathbf{z})$.

\begin{theorem}
	For any choice of $d \in \mathbb{N}$, the $d$-degree normalized C-Kernel $\tilde{\kappa}_\wedge^d$ is more general than $\tilde{\kappa}_\wedge^{d+1}$, that is $\tilde{\kappa}_\wedge^d \geq_G \tilde{\kappa}_\wedge^{d+1}$.
\end{theorem}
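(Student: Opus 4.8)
The plan is to reduce the whole statement to a single scalar inequality and then hand it to Lemma~\ref{lemma}. By the definition of $\geq_G$, proving $\tilde{\kappa}_\wedge^d \geq_G \tilde{\kappa}_\wedge^{d+1}$ means showing $\mathcal{C}(\tilde{\mathbf{K}}^{d}_\mathbf{X}) \leq \mathcal{C}(\tilde{\mathbf{K}}^{d+1}_\mathbf{X})$ for every dataset $\mathbf{X}$. Lemma~\ref{lemma} (applied with $\tilde{\kappa} = \tilde{\kappa}_\wedge^d$ and $\tilde{\kappa}' = \tilde{\kappa}_\wedge^{d+1}$) tells us this is guaranteed the moment we establish the pointwise bound $\tilde{\kappa}_\wedge^d(\mathbf{x},\mathbf{z})^2 \geq \tilde{\kappa}_\wedge^{d+1}(\mathbf{x},\mathbf{z})^2$ for every pair of binary vectors $\mathbf{x},\mathbf{z}$. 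So the entire argument collapses onto verifying this one inequality.

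To do that I would introduce the shorthand $a = |\mathbf{x}| = \langle \mathbf{x},\mathbf{x}\rangle$, $b = |\mathbf{z}| = \langle \mathbf{z},\mathbf{z}\rangle$, and $s = \langle \mathbf{x},\mathbf{z}\rangle$. Since $\kappa_\wedge^d(\mathbf{x},\mathbf{z}) = \binom{s}{d}$ and $\kappa_\wedge^d(\mathbf{x},\mathbf{x}) = \binom{a}{d}$, the normalized kernel is $\tilde{\kappa}_\wedge^d(\mathbf{x},\mathbf{z}) = \binom{s}{d}\big/\sqrt{\binom{a}{d}\binom{b}{d}}$, which is well defined precisely when $a,b \geq d$ (the only case in which normalization is meaningful, recall an example with fewer than $d$ active bits maps to the null vector). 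I would then form the ratio of the squared $(d+1)$-kernel to the squared $d$-kernel and simplify it with the elementary identity $\binom{q}{d+1} = \binom{q}{d}\,\frac{q-d}{d+1}$. All the $(d+1)$ factors cancel and the ratio telescopes to $\frac{(s-d)^2}{(a-d)(b-d)}$.

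It then remains to show $\frac{(s-d)^2}{(a-d)(b-d)} \leq 1$, i.e. $(a-d)(b-d) \geq (s-d)^2$, and this is where the boolean structure does all the work: for binary vectors $s = \langle \mathbf{x},\mathbf{z}\rangle = |\mathbf{x}\cap\mathbf{z}| \leq \min(|\mathbf{x}|,|\mathbf{z}|) = \min(a,b)$, so $a \geq s$ and $b \geq s$. In the main regime $s \geq d+1$ the three quantities $a-d$, $b-d$, $s-d$ are all positive, and $a-d \geq s-d$ together with $b-d \geq s-d$ gives the claim by multiplying two nonnegative inequalities. The degenerate regime $s \leq d$ I would dispatch separately: there $\binom{s}{d+1} = 0$, hence $\tilde{\kappa}_\wedge^{d+1}(\mathbf{x},\mathbf{z}) = 0$ and the required inequality $\tilde{\kappa}_\wedge^d(\mathbf{x},\mathbf{z})^2 \geq 0$ holds trivially.

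The substance of the proof is genuinely light; the only thing requiring care is the bookkeeping of the boundary cases, namely guaranteeing the normalization is defined (so $a,b \geq d$) and isolating the regime where all factors are strictly positive before multiplying. The conceptual crux — and the step I would flag as the one that actually carries the result — is the recognition that the binary constraint $\langle \mathbf{x},\mathbf{z}\rangle \leq \min(\|\mathbf{x}\|^2,\|\mathbf{z}\|^2)$ is exactly what forces $(a-d)(b-d) \geq (s-d)^2$; once that is in hand, combining the telescoping computation with Lemma~\ref{lemma} finishes the argument immediately.
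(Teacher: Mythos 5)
Your proof is correct and follows essentially the same route as the paper's: both reduce the claim to the pointwise inequality $\tilde{\kappa}_\wedge^d(\mathbf{x},\mathbf{z})^2 \geq \tilde{\kappa}_\wedge^{d+1}(\mathbf{x},\mathbf{z})^2$ via Lemma~\ref{lemma}, write the normalized kernel as a ratio of binomial coefficients, and close the argument using the same crux, namely $\langle\mathbf{x},\mathbf{z}\rangle \leq \min(|\mathbf{x}|,|\mathbf{z}|)$. The only differences are cosmetic: you telescope with the recurrence $\binom{q}{d+1}=\binom{q}{d}\,\frac{q-d}{d+1}$ where the paper expands factorials into products of consecutive integers, and you dispatch the degenerate regime $s \leq d$ explicitly, a boundary issue the paper only acknowledges in the remark following its proof.
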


\begin{proof}\label{pf:conj}
	By Lemma \ref{lemma} it is sufficient to prove that 
	$$
	\tilde{\kappa}_\wedge^d(\mathbf{x}_i,\mathbf{x}_j)^2 \geq \tilde{\kappa}_\wedge^{d+1}(\mathbf{x}_i,\mathbf{x}_j)^2
	$$
	is true.
	Let us call $n_{i} = |\mathbf{x}_i|$, $n_{j} = |\mathbf{x}_j|$ and $n_{ij} = |\mathbf{x}_i \cap \mathbf{x}_j|$, then:
	\begin{align*}
		\tilde{\kappa}_\wedge^d(\mathbf{x}_i,\mathbf{x}_j)^2 &= 
		\frac{\kappa_\wedge^d(\mathbf{x}_i,\mathbf{x}_j)^2}{k_\wedge^d(\mathbf{x}_i,\mathbf{x}_i) \kappa_\wedge^d(\mathbf{x}_j,\mathbf{x}_j)} = 
		\frac{\binom{\langle\mathbf{x}_i, \mathbf{x}_j \rangle}{d}^2}{\binom{\langle\mathbf{x}_i, \mathbf{x}_i \rangle}{d} \binom{\langle\mathbf{x}_j, \mathbf{x}_j \rangle}{d}}\\ 
		&= \frac{N_d(\mathbf{x}_i\cap \mathbf{x}_j)^2}{N_d(\mathbf{x}_i)N_d(\mathbf{x}_j)} = \frac{\binom{n_{ij}}{d}}{\binom{n_i}{d}} \frac{\binom{n_{ij}}{d}}{\binom{n_j}{d}}.
	\end{align*}
	Now, let us examine the first term of the product, namely $\frac{\binom{n_{ij}}{d}}{\binom{n_i}{d}}$. We have:
	$$
		\frac{\binom{n_{ij}}{d}}{\binom{n_i}{d}} = \frac{\frac{n_{ij}!}{d!(n_{ij}-d)!}}{\frac{n_i!}{d!(n_i-d)!}} = \frac{n_{ij}!}{n_i!} \frac{(n_{i}-d)!}{(n_{ij}-d)!}.
	$$
	Since $n_{ij} \leq n_i$ by definition, we can say that:
	$$
		\frac{\binom{n_{ij}}{d}}{\binom{n_i}{d}} = \frac{n_{ij}!}{n_i!} \frac{(n_{i}-d)!}{(n_{ij}-d)!} = \frac{n_{ij}!}{n_i!} \frac{(n_i-d)(n_i-d-1)\dots(n_{ij}-d+1)(n_{ij}-d)!}{(n_{ij}-d)!},
	$$
	and so:
	\begin{align*}
		&\frac{\binom{n_{ij}}{d}}{\binom{n_i}{d}} = \frac{n_{ij}!}{n_i!} [(n_i-d)(n_i-d-1)\dots(n_{ij}-d+1)] \geq\\  &\frac{n_{ij}!}{n_i!} [(n_i-d-1)(n_i-d-2)\dots(n_{ij}-d)] =\frac{\binom{n_{ij}}{d+1}}{\binom{n_i}{d+1}}.
	\end{align*}
	The same inequality is clearly valid for the other term $\frac{\binom{n_{ij}}{d}}{\binom{n_j}{d}}$ and hence:
	$$
		\tilde{\kappa}_\wedge^d(\mathbf{x}_i,\mathbf{x}_j)^2 = \frac{\binom{n_{ij}}{d}}{\binom{n_i}{d}} \frac{\binom{n_{ij}}{d}}{\binom{n_j}{d}} \geq \frac{\binom{n_{ij}}{d+1}}{\binom{n_i}{d+1}} \frac{\binom{n_{ij}}{d+1}}{\binom{n_j}{d+1}} = \tilde{\kappa}_\wedge^{d+1}(\mathbf{x}_i,\mathbf{x}_j)^2.
	$$
\end{proof}

Note that the previous proof is valid anytime $d$ is smaller or equale than the examples with the minimum number of ones, otherwise the null vector would show up.
In Section \ref{sec:exp} it is empirically shown, for each dataset, how the normalized spectral ratio increases with the increasing of the degree.

\subsubsection{D-Kernel's expressiveness}\label{sec:exp-dk}
With similar considerations as for the C-Kernel, it is obvious that there exists a dependence between features of a D-Kernel of arity $d$ and $d+1$. As for the C-Kernel, let us consider the feature $x_1x_2x_3$ (i.e., $x_1\vee x_2 \vee x_3$) of arity 3. This will be active anytime at least one of the features $x_1x_2$, $x_2x_3$ or $x_1x_3$ of arity 2 is active.
Figure \ref{fig:dgraph} gives a visual idea of these dependencies. 

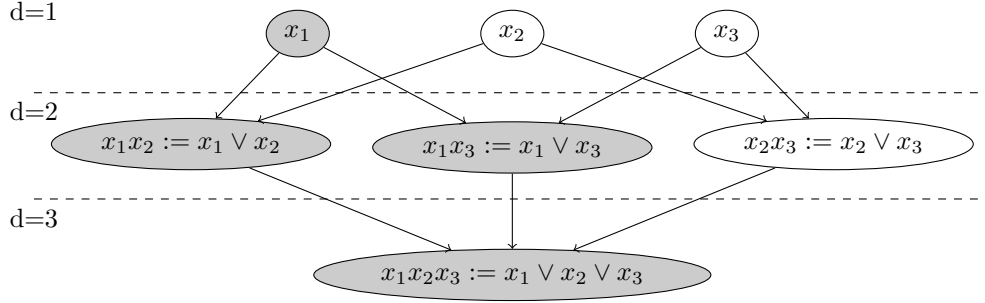
\begin{figure}[h]
\tikzset{elliptic state/.style={draw,ellipse}}
\tikzstyle{stuff_fill}=[preaction={fill=black!20}]
\centering
\begin{tikzpicture}[auto]
    \node[elliptic state, stuff_fill] (x1)                {$x_1$};
    \node[elliptic state] (x2) [right =2cm of x1] {$x_2$};
    \node[elliptic state] (x3) [right =2cm of x2] {$x_3$};
    \node[elliptic state, stuff_fill] (x12) [below left =1cm and -.2cm of x1] {$x_1x_2:=x_1 \vee x_2$};
    \node[elliptic state, stuff_fill] (x13) [below =.85cm of x2] {$x_1x_3:=x_1 \vee x_3$};
    \node[elliptic state] (x23) [below right=1cm and -.2cm of x3] {$x_2x_3:=x_2 \vee x_3$};
    \node[elliptic state, stuff_fill] (x123) [below =1cm of x13] {$x_1x_2x_3 := x_1 \vee x_2 \vee x_3$};
    \node (xml) [left =.1cm of x12]{};
    \node (xmr) [right =.1cm of x23]{};
    \node (t) [above =1.4cm of xml]{d=1};
	\path[->](x1) edge node [align=center]{$$} (x12) 
             (x1) edge node [align=center]{$$} (x13)
             (x2) edge node [align=center]{$$} (x12)
             (x2) edge node [align=center]{$$} (x23)
             (x3) edge node [align=center]{$$} (x13)
             (x3) edge node [align=center]{$$} (x23)
             (x12) edge node [align=center]{$$} (x123)
             (x13) edge node [align=center]{$$} (x123)
             (x23) edge node [align=center]{$$} (x123);
    \path (x123) -- coordinate (l32) (xml) -- coordinate (l21) (x1);
   \draw[dashed](xml |- l32) node[below] {d=3} -- (l32 -| xmr);
   \draw[dashed] (xml |- l21) node[below] {d=2}-- (l21 -| xmr);
\end{tikzpicture}
\caption{Dependencies between different arities of the D-Kernel. The nodes are features and the colored ones are active features. \label{fig:dgraph}}
\end{figure}

So, we expect that the higher the order of the D-Kernel the higher the degree of density of the corresponding kernel matrix. The following theorem and proof, demonstrate that this is true.

\begin{theorem}
For any choice of $d \in \mathbb{N}$, the $d$-degree normalized D-Kernel $\tilde{\kappa}_\vee^d$ is more specific than $\tilde{\kappa}_\vee^{d+1}$, that is $\tilde{\kappa}_\vee^d \leq_G \tilde{\kappa}_\vee^{d+1}$.
	
\end{theorem}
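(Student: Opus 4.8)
The plan is to reduce the statement to a pointwise inequality between the two normalized kernels and then establish that inequality combinatorially. Recall that ``more specific'' means a larger spectral ratio, so the claim $\tilde{\kappa}_\vee^d \leq_G \tilde{\kappa}_\vee^{d+1}$ is by definition equivalent to $\mathcal{C}(\tilde{\mathbf{K}}_\vee^{d+1}) \leq \mathcal{C}(\tilde{\mathbf{K}}_\vee^{d})$. Applying Lemma~\ref{lemma} with $\tilde{\kappa} \equiv \tilde{\kappa}_\vee^{d+1}$ and $\tilde{\kappa}' \equiv \tilde{\kappa}_\vee^{d}$, it therefore suffices to prove the pointwise bound $\tilde{\kappa}_\vee^{d+1}(\mathbf{x}_i,\mathbf{x}_j)^2 \geq \tilde{\kappa}_\vee^{d}(\mathbf{x}_i,\mathbf{x}_j)^2$ for every pair of examples. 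Since $\kappa_\vee^d$ counts a set of common true $d$-disjunctions it is nonnegative, and by Cauchy--Schwarz the normalized values lie in $[0,1]$; hence the squared inequality is equivalent to the unsquared one $\tilde{\kappa}_\vee^{d+1}(\mathbf{x}_i,\mathbf{x}_j) \geq \tilde{\kappa}_\vee^{d}(\mathbf{x}_i,\mathbf{x}_j)$. Note this is the direction opposite to the C-Kernel theorem and matches the density intuition of Figure~\ref{fig:dgraph}.

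Next I would rewrite the normalized kernel in a probabilistic form to expose its structure. Writing $n_i=|\mathbf{x}_i|$, $n_j=|\mathbf{x}_j|$, $n_{ij}=|\mathbf{x}_i\cap\mathbf{x}_j|$ and using the self-kernel simplification $\kappa_\vee^d(\mathbf{x},\mathbf{x})=\binom{n}{d}-\binom{n-|\mathbf{x}|}{d}$, dividing numerator and denominator by $\binom{n}{d}$ yields
\[
\tilde{\kappa}_\vee^d(\mathbf{x}_i,\mathbf{x}_j)=\frac{\Pr[A\cap B]}{\sqrt{\Pr[A]\,\Pr[B]}},
\]
where $A$ and $B$ are the events that a uniformly random $d$-subset $S$ of the $n$ variables meets $\mathbf{x}_i$, respectively $\mathbf{x}_j$; consequently $\tilde{\kappa}_\vee^d(\mathbf{x}_i,\mathbf{x}_j)^2=\Pr[B\mid A]\,\Pr[A\mid B]$. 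This reformulation also makes the limiting behaviour transparent: as $d\to n$ both events become certain and every entry tends to $1$, so the matrix tends to the all-ones matrix whose spectral ratio attains the minimal value $1$, confirming that larger arities are the more general ones.

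Given this identity, the step I would isolate is the monotonicity of each conditional probability: I would show $\Pr[B\mid A]$ is nondecreasing in $d$ (and, by the symmetric argument, so is $\Pr[A\mid B]$), whence their product $\tilde{\kappa}_\vee^d(\mathbf{x}_i,\mathbf{x}_j)^2$ is nondecreasing. Setting $P_d(k)=\binom{n-k}{d}/\binom{n}{d}$ for the probability that $S$ avoids a fixed $k$-set, one has $\Pr[A]=1-P_d(n_i)$, $\Pr[A\cap B]=1-P_d(n_i)-P_d(n_j)+P_d(u)$ with $u=n_i+n_j-n_{ij}=|\mathbf{x}_i\cup\mathbf{x}_j|\ge n_j$, so that
\[
\Pr[B\mid A]=1-\frac{P_d(n_j)-P_d(u)}{1-P_d(n_i)},
\]
and the task becomes showing that the ratio on the right is nonincreasing in $d$, which I would attack through the multiplicative recurrence $P_{d+1}(k)=P_d(k)\,\frac{n-d-k}{n-d}$.

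The hard part is exactly this ratio monotonicity. Unlike the C-Kernel theorem, where the numerator was a single binomial $\binom{n_{ij}}{d}$ that split into two independent ratios amenable to the termwise argument, here the numerator is a four-term inclusion--exclusion that does not factor, and the troublesome contribution is the cross term $P_d(u)$ coupling the two examples. In particular a naive ``decreasing numerator over increasing denominator'' argument fails, since $P_d(n_j)-P_d(u)$ need not be monotone in $d$; the decrease of the full ratio comes from the interplay with the $1-P_d(n_i)$ denominator. I expect to resolve this either by the nested coupling $S_1\subset S_2\subset\cdots$ (take $S_d$ to be the first $d$ elements of a uniform random permutation, so $A,B$ become the hitting-time events $\{\tau_{\mathbf{x}_i}\le d\}$, $\{\tau_{\mathbf{x}_j}\le d\}$), or by clearing denominators via the recurrence and verifying the resulting polynomial inequality directly. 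Throughout I would keep the normalization well defined by assuming each example has at least one active variable, so the denominators $\binom{n}{d}-\binom{n-|\mathbf{x}|}{d}$ are strictly positive for $1\le d\le n$, and I would treat the boundary cases where some binomial vanishes (e.g.\ $d>n-u$) separately, as there the relevant avoidance probabilities are zero and the inequality becomes immediate.
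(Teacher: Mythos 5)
Your reduction is sound and in fact retraces the paper's own proof: the appeal to Lemma~\ref{lemma}, the factorization $\tilde{\kappa}_\vee^d(\mathbf{x}_i,\mathbf{x}_j)^2 = \frac{\kappa_\vee^d(\mathbf{x}_i,\mathbf{x}_j)}{\kappa_\vee^d(\mathbf{x}_i,\mathbf{x}_i)}\cdot\frac{\kappa_\vee^d(\mathbf{x}_i,\mathbf{x}_j)}{\kappa_\vee^d(\mathbf{x}_j,\mathbf{x}_j)}$ (your $\Pr[B\mid A]\,\Pr[A\mid B]$), and the rewriting of each factor as $1$ minus the ratio $\frac{N_d(\overline{\mathbf{x}_j}) - N_d(\overline{\mathbf{x}_i\cup\mathbf{x}_j})}{N_d(\mathbf{f}) - N_d(\overline{\mathbf{x}_i})}$ (your $\frac{P_d(n_j)-P_d(u)}{1-P_d(n_i)}$) are exactly the paper's steps, only phrased probabilistically. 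But the proposal stops precisely where the theorem still has to be proved. The claim that each conditional probability $\Pr[B\mid A]$ is nondecreasing in $d$ (equivalently, that the ratio above is nonincreasing in $d$) is the entire mathematical content of the statement, and you do not prove it: you observe that it is hard and name two candidate strategies (a nested permutation coupling via hitting times, or clearing denominators through the recurrence $P_{d+1}(k)=P_d(k)\,\frac{n-d-k}{n-d}$) without carrying either one out. A correct reduction plus a plan for the key inequality is not a proof; this is a genuine gap.

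For comparison, the paper closes this step by asserting the counting identities $N_d(\overline{\mathbf{x}_j}) - N_d(\overline{\mathbf{x}_i \cup \mathbf{x}_j}) = (n_i - n_{ij})\,N_{d-1}(\overline{\mathbf{x}_j}\setminus\{a_j\})$ and $N_d(\mathbf{f}) - N_d(\overline{\mathbf{x}_i}) = n_i\,N_{d-1}(\mathbf{f}\setminus\{a_i\})$, which collapse the troublesome ratio to $\frac{n_i-n_{ij}}{n_i}\cdot\frac{\binom{n-n_j-1}{d-1}}{\binom{n-1}{d-1}}$, after which monotonicity in $d$ follows by the same telescoping-factorial argument used for the C-Kernel. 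Your instinct that the four-term inclusion--exclusion ``does not factor'' is, as it happens, well founded: those identities are overcounts, not equalities (they count pairs of a distinguished hit variable and a $(d-1)$-subset, so a subset meeting the distinguished set in $k$ variables is counted $k$ times). For instance, with $n=4$, $\mathbf{x}_i=\{1,2\}$, $\mathbf{x}_j=\{3\}$, $d=2$, the true ratio $\kappa_\vee^2(\mathbf{x}_i,\mathbf{x}_j)/\kappa_\vee^2(\mathbf{x}_i,\mathbf{x}_i)$ equals $2/5$, whereas the paper's factored expression gives $1/3$. So the step you deferred is genuinely the delicate one --- the paper itself passes over it with identities that do not hold exactly --- but diagnosing the difficulty does not discharge it: as submitted, your proposal establishes strictly less than the theorem.
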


\begin{proof}
	By Lemma \ref{lemma} it is sufficient to prove that
	$$
	\tilde{\kappa}_\vee^d(\mathbf{x}_i,\mathbf{x}_j)^2 \leq \tilde{\kappa}_\vee^{d+1}(\mathbf{x}_i,\mathbf{x}_j)^2
	$$
	is true.
	Let us consider the left hand side term of the above mentioned inequality:
	$$
		\tilde{\kappa}_\vee^d(\mathbf{x}_i,\mathbf{x}_j)^2 = \frac{\kappa_\vee^d(\mathbf{x}_i,\mathbf{x}_j)^2}{\kappa_\vee^d(\mathbf{x}_i,\mathbf{x}_i) \kappa_\vee^d(\mathbf{x}_j,\mathbf{x}_j)} = \frac{\kappa_\vee^d(\mathbf{x}_i,\mathbf{x}_j)}{\kappa_\vee^d(\mathbf{x}_i,\mathbf{x}_i)} \frac{\kappa_\vee^d(\mathbf{x}_i,\mathbf{x}_j)}{\kappa_\vee^d(\mathbf{x}_j,\mathbf{x}_j)}.
	$$
	We can write the term $\frac{\kappa_\vee^d(\mathbf{x}_i,\mathbf{x}_j)}{\kappa_\vee^d(\mathbf{x}_i,\mathbf{x}_i)}$ as:
	\begin{align*}
		\frac{\kappa_\vee^d(\mathbf{x}_i,\mathbf{x}_j)}{\kappa_\vee^d(\mathbf{x}_i,\mathbf{x}_i)} &= \frac{N_d(\mathbf{f}) - N_d(\overline{\mathbf{x}_i}) - N_d(\overline{\mathbf{x}_j}) + N_d(\overline{\mathbf{x}_i \cup \mathbf{x}_j})}{N_d(\mathbf{f}) - N_d(\overline{\mathbf{x}_i})}\\ 
		&= 1 - \frac{N_d(\overline{\mathbf{x}_j}) - N_d(\overline{\mathbf{x}_i \cup \mathbf{x}_j})}{N_d(\mathbf{f}) - N_d(\overline{\mathbf{x}_i})}.
	\end{align*}
	
	Let us reflect on the meaning of the last fraction. At the numerator we have $N_d(\overline{\mathbf{x}_j}) - N_d(\overline{\mathbf{x}_i \cup \mathbf{x}_j})$ which counts the number of $d$-combinations without repetition which have at least one active variable in $\mathbf{x}_i$ but that is not in $\mathbf{x}_j$. Similarly, the denominator, that is $N_d(\mathbf{f}) - N_d(\overline{\mathbf{x}_i})$, counts the $d$-combinations without repetition which have at least one active variable in $\mathbf{x}_i$.
	Formally, we have:
	\begin{align*}
		N_d(\overline{\mathbf{x}_j}) - N_d(\overline{\mathbf{x}_i \cup \mathbf{x}_j}) &= (n_i - n_{ij})N_{d-1}(\overline{\mathbf{x}_j}\setminus \{a_j\})\\
		N_d(\mathbf{f}) - N_d(\overline{\mathbf{x}_i}) &= n_i N_{d-1}(\mathbf{f}\setminus \{a_i\}),
	\end{align*}
	where $a_i \in \mathbf{x}_i$ and $a_j \in \mathbf{x}_j$ are generic active variables in $\mathbf{x}_i$ and $\mathbf{x}_j$, respectively.
	
	So, we have that
	\begin{align*}
	\frac{\kappa_\vee^d(\mathbf{x}_i,\mathbf{x}_j)}{\kappa_\vee^d(\mathbf{x}_i,\mathbf{x}_i)} &= 1-\frac{(n_i - n_{ij})}{n_i} \frac{N_{d-1}(\overline{\mathbf{x}_j}\setminus \{a_j\})}{N_{d-1}(\mathbf{f}\setminus \{a_i\})} = 1-\frac{(n_i - n_{ij})}{n_i} \frac{\binom{n-n_j-1}{d-1}}{\binom{n-1}{d-1}}\\
	&= 1-\frac{(n_i - n_{ij})}{n_i} \frac{\frac{(n-n_j-1)!}{(d-1)!(n-n_j-d)!}}{\frac{(n-1)!}{(d-1)!(n-d)!}} \\
	&= 1-\frac{(n_i - n_{ij})}{n_i} \frac{(n-n_j-1)!}{(n-1)!} \frac{(n-d)!}{(n-n_j-d)!} \\
	&= 1-\beta \frac{(n-d)(n-d-1)\dots(n-n_j-d+1)(n-n_j-d)!}{(n-n_j-d)!}\\
	&= 1-\beta [(n-d)(n-d-1)\dots(n-n_j-d+1)],
	\end{align*}
	where $\beta = \frac{(n_i - n_{ij})}{n_i} \frac{(n-n_j-1)!}{(n-1)!}$. 
	
	Thus, it is easy to see that
	\begin{align*}
	& \frac{\kappa_\vee^d(\mathbf{x}_i,\mathbf{x}_j)}{\kappa_\vee^d(\mathbf{x}_i,\mathbf{x}_i)} = 1-\beta [(n-d)(n-d-1)\dots(n-n_j-d+1)] \leq\\
	& 1-\beta [(n-d-1)(n-d-2)\dots(n-n_j-d)] = \frac{\kappa_\vee^{d+1}(\mathbf{x}_i,\mathbf{x}_j)}{\kappa_\vee^{d+1}(\mathbf{x}_i,\mathbf{x}_i)}	.
	\end{align*}
	Since the same consideration can be made for $\frac{\kappa_\vee^d(\mathbf{x}_i,\mathbf{x}_j)}{\kappa_\vee^d(\mathbf{x}_j,\mathbf{x}_j)}$, we can conclude that
	$$
	\tilde{\kappa}_\vee^d(\mathbf{x}_i,\mathbf{x}_j)^2 = \frac{\kappa_\vee^d(\mathbf{x}_i,\mathbf{x}_j)}{\kappa_\vee^d(\mathbf{x}_i,\mathbf{x}_i)} \frac{\kappa_\vee^d(\mathbf{x}_i,\mathbf{x}_j)}{\kappa_\vee^d(\mathbf{x}_j,\mathbf{x}_j)} \leq \frac{\kappa_\vee^{d+1}(\mathbf{x}_i,\mathbf{x}_j)}{\kappa_\vee^{d+1}(\mathbf{x}_i,\mathbf{x}_i)} \frac{\kappa_\vee^{d+1}(\mathbf{x}_i,\mathbf{x}_j)}{\kappa_\vee^{d+1}(\mathbf{x}_j,\mathbf{x}_j)} = \tilde{\kappa}_\vee^{d+1}(\mathbf{x}_i,\mathbf{x}_j)^2.
	$$
\end{proof}

In Section \ref{sec:exp} it is empirically shown, for each dataset, how the normalized spectral ratio decreases with the increasing of the degree.

\section{Experiments and Results}\label{sec:exp}
In this section we present the extensive experimental work we have done. All the experiments are based on the \textit{CF-KOMD} framework. The implementation of the framework and all the used datasets are available at \url{https://github.com/makgyver/pyros}.

\subsection{Datasets}\label{sec:datasets}
In this section we introduce the datasets used in the experiments. Table \ref{tab:datasets} shows 
a brief description of the datasets.

\begin{savenotes}
\begin{table}[h]
	\centering
	\begin{tabular}{|c|c|c|c|c|c|}
		\hline
		\textbf{Dataset} & \textbf{Item type} & $|\mathcal{U}|$ & $|\mathcal{I}|$ & $|\mathcal{R}|$ & \textbf{Density}\\ \hline
		MovieLens\footnote{\url{http://grouplens.org/datasets/movielens}} & Movies & 6040 & 3706 & 1M & 1.34\% \\ \hline
		BookCrossing \footnote{Reduced version of the dataset \url{http://grouplens.org/datasets/book-crossing}}& Books & 2802 & 5892  & 70593 & 0.004\%\\ \hline
		Ciao \cite{Guo:2014}\footnote{\url{http://www.librec.net/datasets.html#ciaodvd}}& Movies & 17615 & 16121 & 72664 & 0.025\%\\ \hline
		Netflix \footnote{Reduced version of the Netflix Prize Dataset \url{http://www.netflixprize.com}} & Movies & 93705 & 3561 & 3.3M & 0.99\%\\ \hline
		FilmTrust \cite{Guo:2013}\footnote{\url{http://www.librec.net/datasets.html#filmtrust}} & Movies & 1508 & 2071 & 35496 & 1.13\%\\ \hline
		Jester \cite{Goldberg:2001}\footnote{\url{http://goldberg.berkeley.edu/jester-data/}} & Jokes & 24430 & 100 & 1M & 42.24\%\\ \hline
	\end{tabular}
	\caption{Brief description of the used datasets.\label{tab:datasets}}
\end{table}
\end{savenotes}

One of the main characteristic of these datasets is the high sparsity of the ratings, with the exception of \verb#Jester#, and how these ratings are distributed. In particular, as extensively analyzed in \cite{Polato:2017}, the distribution of the ratings on the items is long tailed. While, from the users point of view, only the \verb#Ciao# and \verb#Book Crossing# datasets have a well defined long tail distribution.

For experimental purposes, we removed from the \verb#Jester# dataset all the users with more than 90 ratings (i.e., $>90\%$ of the items).

\subsection{Experimental setting}
Experiments have been replicated 5 times for each dataset. Datasets have been pre-processed as described in the following:
\begin{enumerate}
	\item users are randomly split in 5 sets of the same size;
	\item for each user its ratings are further split in two halves;
	\item at each round test, we use all the ratings in 4 sets of users plus one half of ratings of the remaining set as training, and the remaining ratings as test set;
	\item users with less than 5 ratings are forced to be in the training set.
\end{enumerate}
This setting avoids situations of users cold start, since in the training phase we have at least five ratings for every user in the test set. 
Results reported below are the averages (with its standard deviations) over the 5 folds.\\

The rankings' evaluation metric used to compare the performances of the methods is the AUC (Area Under the receiver operating characteristic Curve) defined as in the following:
$$
	\textit{AUC} = \frac{1}{|\mathcal{U}|} \sum\limits_{u\in\mathcal{U}} \frac{1}{|\mathcal{I}_u| \cdot |\mathcal{I}\setminus \mathcal{I}_u|} \sum\limits_{i \in \mathcal{I}_u} \sum\limits_{j \notin \mathcal{I}_u} \mathbb{I}[\hat{r}_{ui} > \hat{r}_{uj}]
$$
where $\hat{r}_{ui}, \hat{r}_{uj}$ are the predicted scores for the user-item pairs $(u,i), (u,j)$ and $\mathbb{I} : \textit{Bool} \rightarrow \mathbb{B}$ is the indicator function which returns 1 if the predicate is \textit{true} and 0 otherwise.

\subsection{Top-N recommendation}
In this section, we show the results obtained on the datasets described in Section \ref{sec:datasets} using the CF-KOMD framework (see Section \ref{sec:cfkomd}). We compare the D-Kernel and the C-Kernel against the best perfoming kernels on these datasets as reported in \cite{Polato:2017}. 
In all the experiments the parameter $\lambda_p$ of the optimization problem (\ref{opt3}) is fixed to 0.1 (different values did not make much difference as shown in \cite{Polato:2016}).

Table \ref{tab:results} summarizes the results. All the reported results are in terms of AUC. We also measure the performances with the evaluation metrics mAP@10 (Mean Average Precision), and nDCG@10 (Normalized Discounted Cumulative Gain), and they are consistent with the AUC.

\begin{table}[h!]
	\centering
	{\renewcommand{\arraystretch}{1.1}
	\begin{tabular}{r|c|c|c|c|c|}
		&\textbf{D-Kernel} & \textbf{C-Kernel} & \textbf{Linear} & \textbf{Tanimoto} & \textbf{mDNF}\\ \hline
		
		\verb#BookCrossing# & $\mathbf{\underset{\pm 0.0027}{0.7651}} (8)$ & $\underset{\pm 0.0023}{0.7040} (2)$ & $\underset{\pm 0.0023}{0.7439}$ & $\underset{\pm 0.0021}{0.7592}$ & $\underset{\pm 0.0017}{0.7371} $ \\
		
		\verb#Ciao# & $\mathbf{\underset{\pm 0.0065}{0.8381}} (116)$ & $\underset{\pm 0.0094}{0.5401} (2)$ & $\underset{\pm 0.0057}{0.7179}$ & $\underset{\pm 0.0041}{0.7538} $ & $\underset{\pm 0.004}{0.7859} $ \\
		
		\verb#FilmTrust# & $\mathbf{\underset{\pm 0.0036}{0.9705}} (38)$ & $\underset{\pm 0.0075}{0.9511} (2)$ & $\underset{\pm 0.0049}{0.9611}$ & $\underset{\pm 0.0054}{0.964} $ & $\underset{\pm 0.0056}{0.9608} $ \\		
		
		\verb#Jester# & $\mathbf{\underset{\pm 0.0024}{0.7271}} (34)$ & $\underset{\pm 0.0028}{0.6012} (2)$ & $\underset{\pm 0.0026}{0.6019}$ & $\underset{\pm 0.0025}{0.6263} $ & $\underset{\pm 0.0981}{0.5384}$ \\	

		\verb#MovieLens# &  $\underset{\pm 0.0004}{0.8946} (2)$ & $\underset{\pm 0.0004}{0.8923} (2)$ & $\mathbf{\underset{\pm 0.0003}{0.8961}}$ & $\underset{\pm 0.0004}{0.8922} $ & $\underset{\pm 0.0027}{0.8065}$ \\	

		\verb#Netflix# & $\mathbf{\underset{\pm 0.001}{0.9436}} (2)$ & $\underset{\pm 0.0009}{0.9362} (2)$ & $\underset{\pm 0.0011}{0.9427}$ & $\underset{\pm 0.0012}{0.9381} $ & $\underset{\pm 0.0013}{0.9}$ \\	 \hline
		
	\end{tabular}}
	\caption{AUC results on five CF datasets using different kernels. Results are reported with their standard deviation and the best performing degrees (when applicable) are indicated inside the parenthesis. For each dataset the highest AUC is highlighted in \textbf{bold}. \label{tab:results}}
\end{table}

In almost all the datasets (5 out of 6) the D-Kernel achieves the best AUC score with a significant improvement on \verb#Ciao# and \verb#Jester#. In fact, this result is higher than the state-of-the-art result achieved by MSDW (see \cite{Polato:2017}). It is also worth to notice that the best score on \verb#Ciao# is reached with $d=116$ which is surprisingly high. However, this is due to the combination of some peculiar features of this dataset, such as, its high degree of sparseness, its high number of features (i.e., users) and its double long tailed distribution. These characteristics entail a very slow decrease of the complexity with the increasing of $d$, as underlined by the Figure \ref{fig:comp-ciao}.
Conversely, for both \verb#Netflix# and \verb#MovieLens# the best performing arity for the D-Kernel is 2, even though, in the first case it is also the overall highest score, while, in the second, the best score is achieved by the linear kernel. 

We also evaluated the performance of the C-Kernel and it achieves poor performance in all datasets. This is due to its high expressiveness: in all the cases its best results have been reached with $d=2$. It is important to underline that the null vector problem described in Section \ref{sec:ckernel} has artificially been solved by forcing, in the kernel matrix, 1 in correspondence of the diagonal of these degenerate examples. Even though this is a reasonable fix, it is an arbitrary choice.

Regarding the mDNF kernel, results confirm that too high expressiveness do not lead to good performances. Specifically, in half of the datasets the mDNF kernel performs better than the C-Kernel and it is comparable to the linear one. On the other half its performances are very poor due to its high complexity (see Figure \ref{fig:compc}).

In Figure \ref{fig:dres} and \ref{fig:cres}  the AUC score achieved with different arities of the D-Kernel and the C-Kernel, respectively, are depicted. In both cases the trend of the plots are quite consistent. The performances of the D-Kernel slowly increase until they reach a peak and then they start to constantly decrease. In some cases the peak is reached with the smallest (non trivial) arity $d=2$ and so the plot is always decreasing. 

In contrast, the C-Kernel's performances rapidly decrease since they become very poor. This is a further confirmation that these kind of datasets need kernel with a  complexity lower than the linear one.

\begin{figure}[htbp]
\centering
\subfigure[BookCrossing]
{\includegraphics[width=3.95cm]{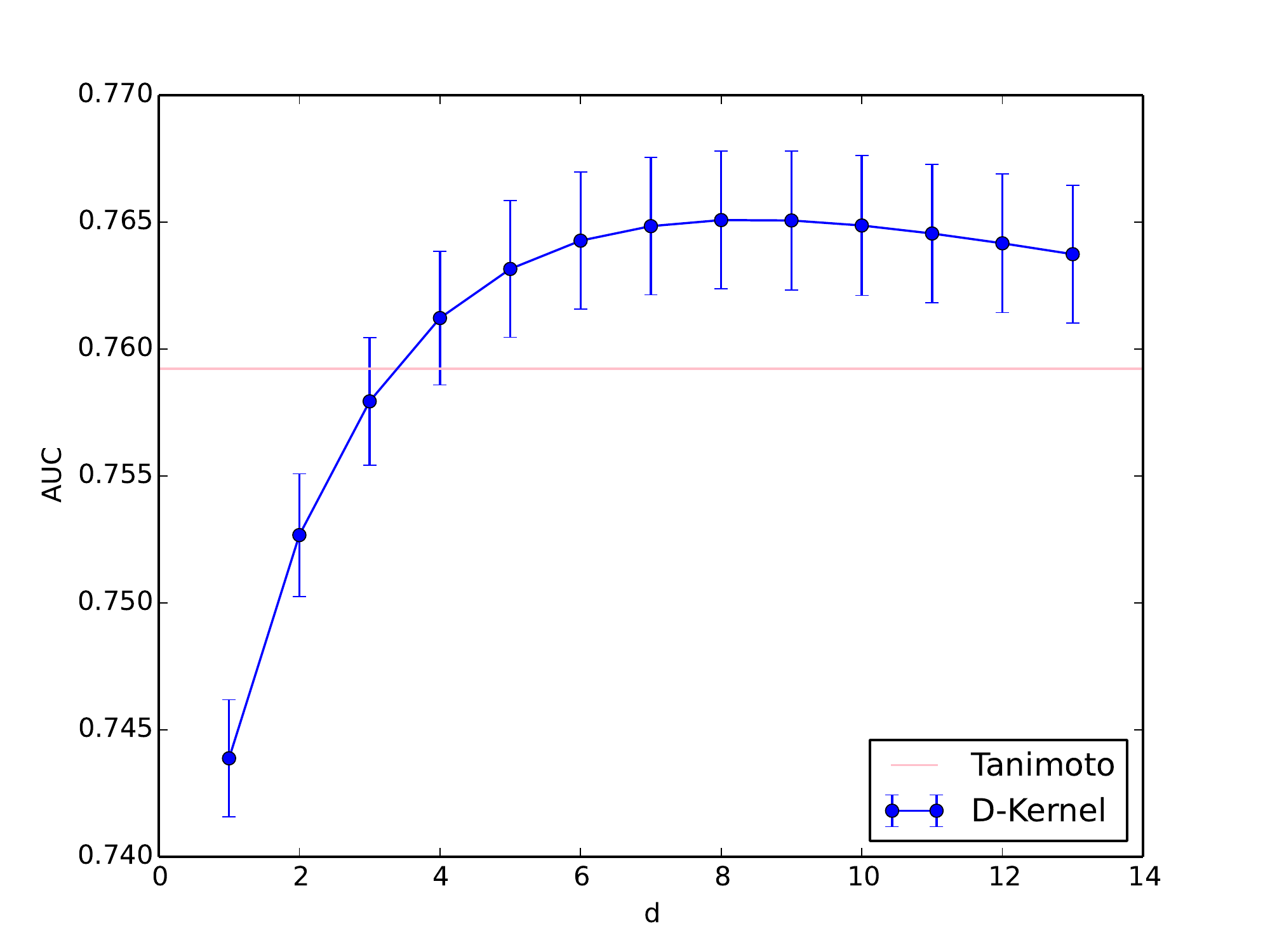}}
\subfigure[Ciao]
{\includegraphics[width=3.95cm]{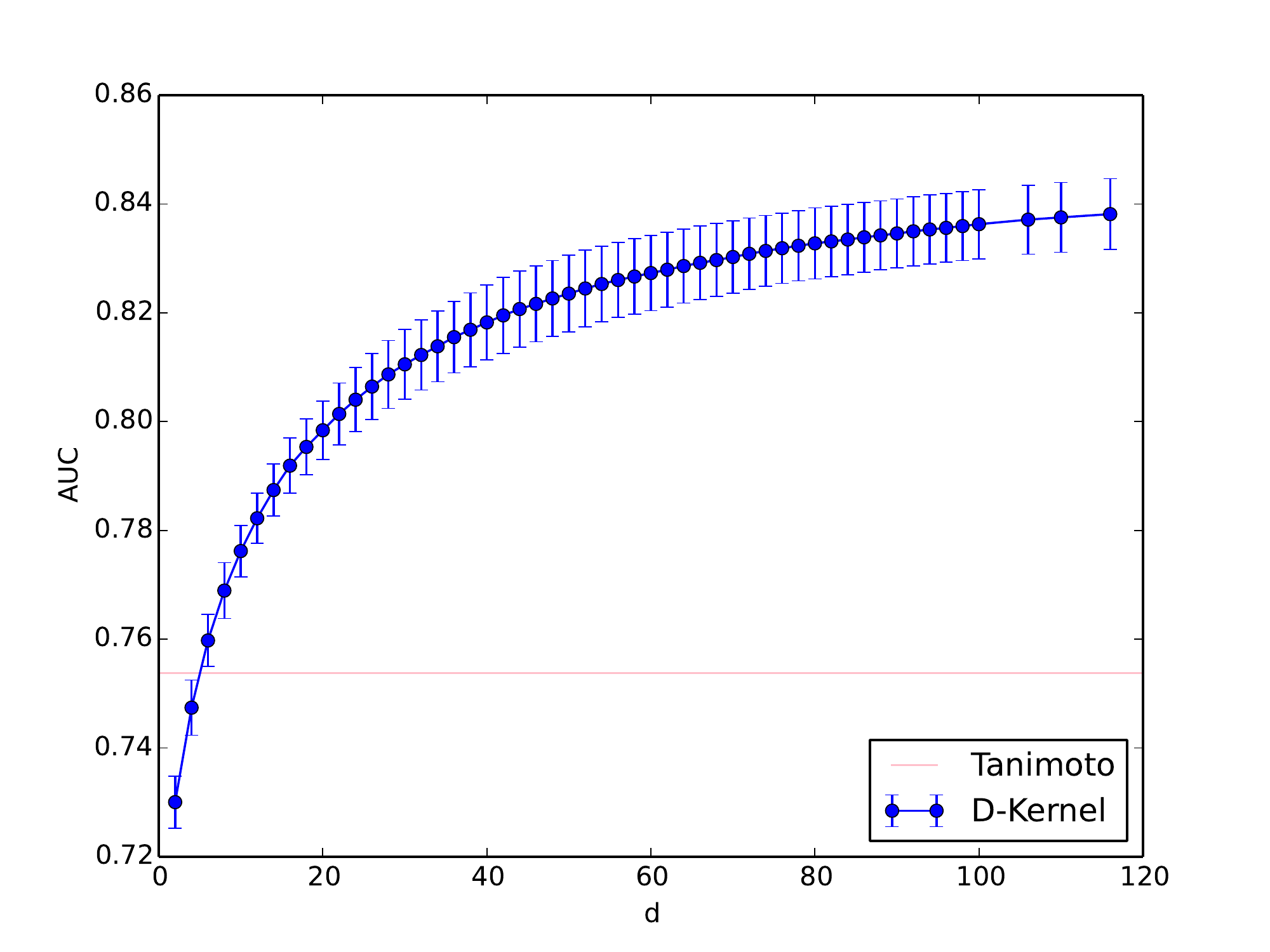}}
\subfigure[Film Trust]
{\includegraphics[width=3.95cm]{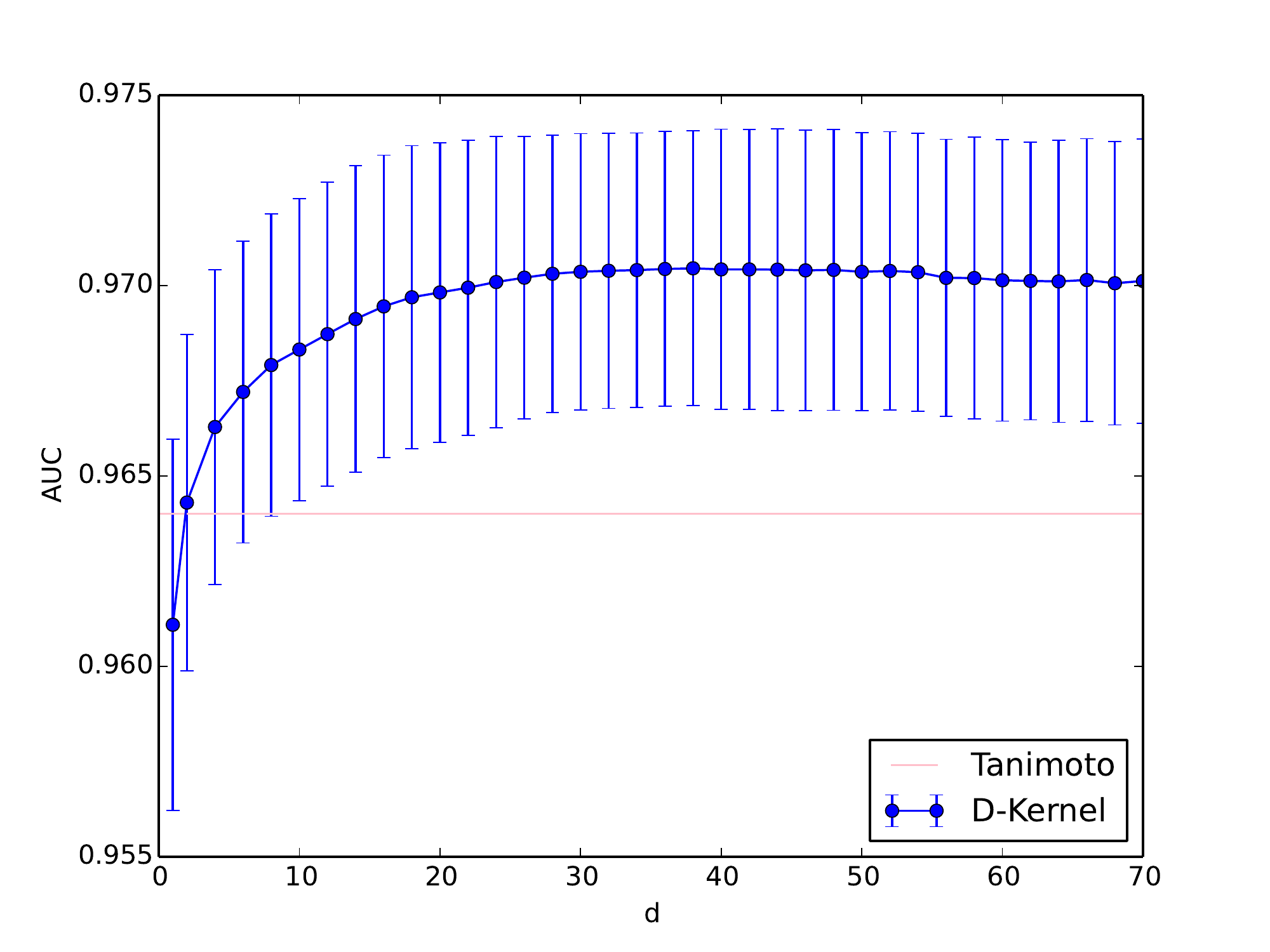}}
\subfigure[Jester]
{\includegraphics[width=3.95cm]{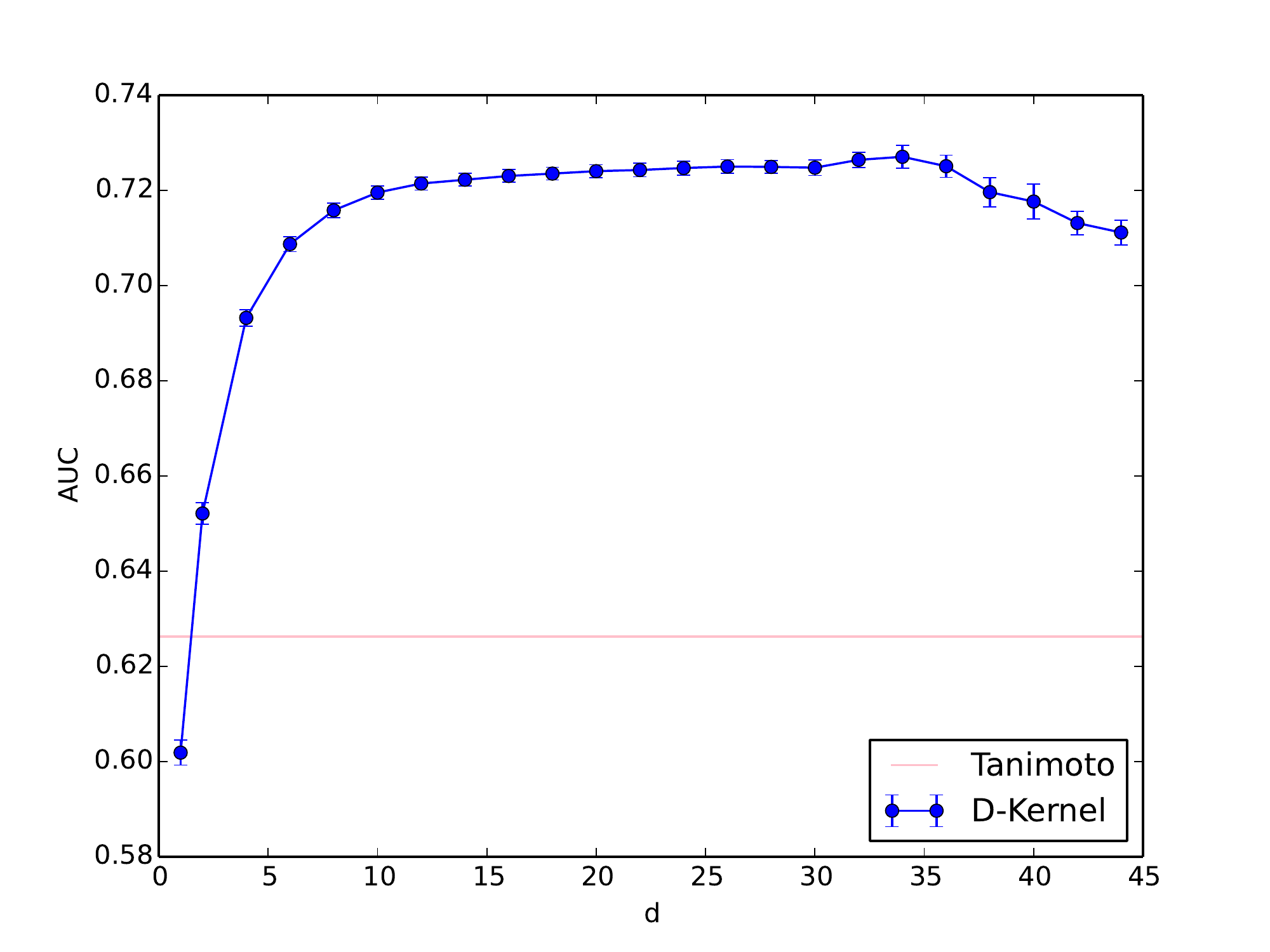}}
\subfigure[MovieLens]
{\includegraphics[width=3.95cm]{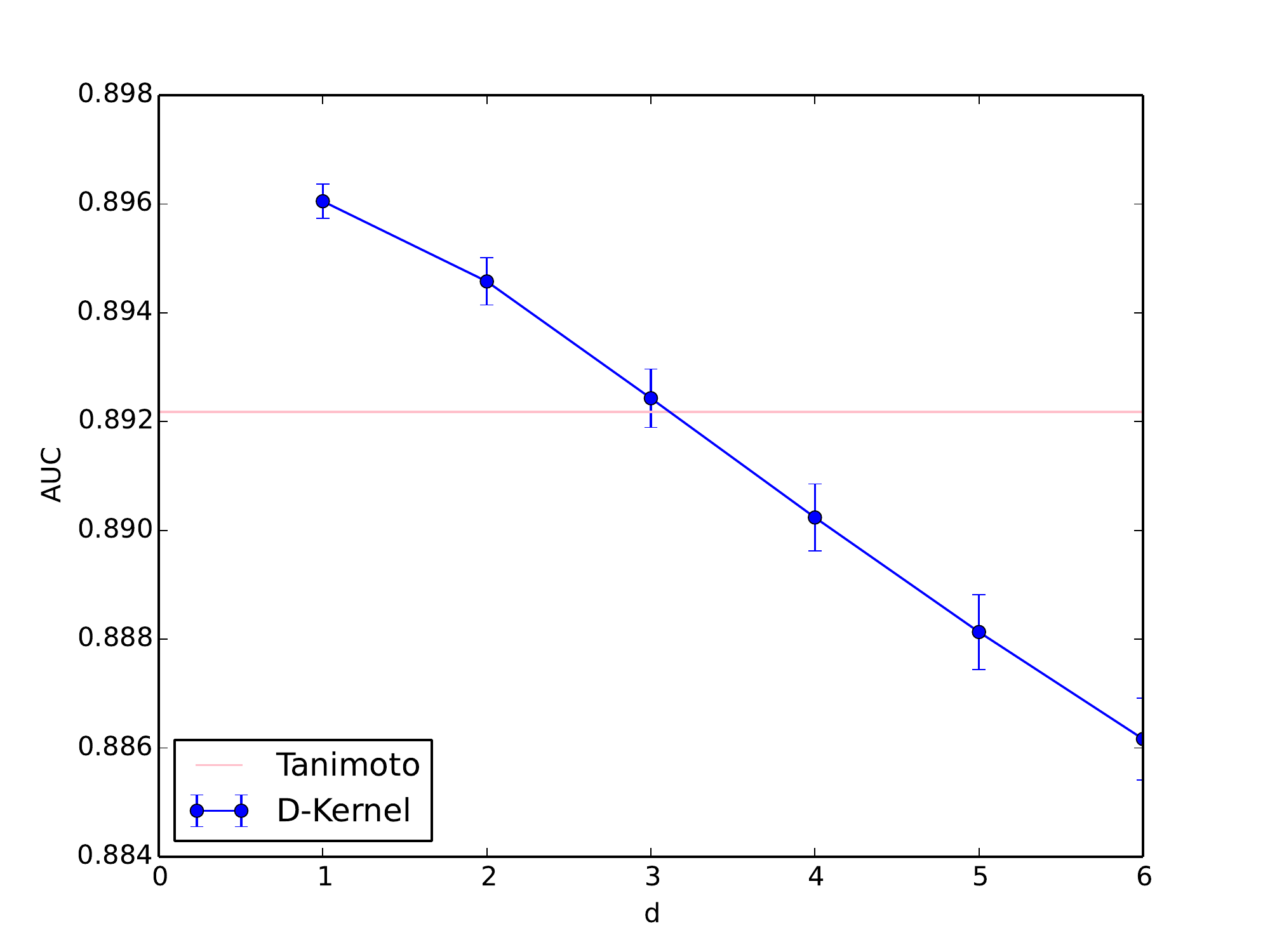}}
\subfigure[Netflix]
{\includegraphics[width=3.95cm]{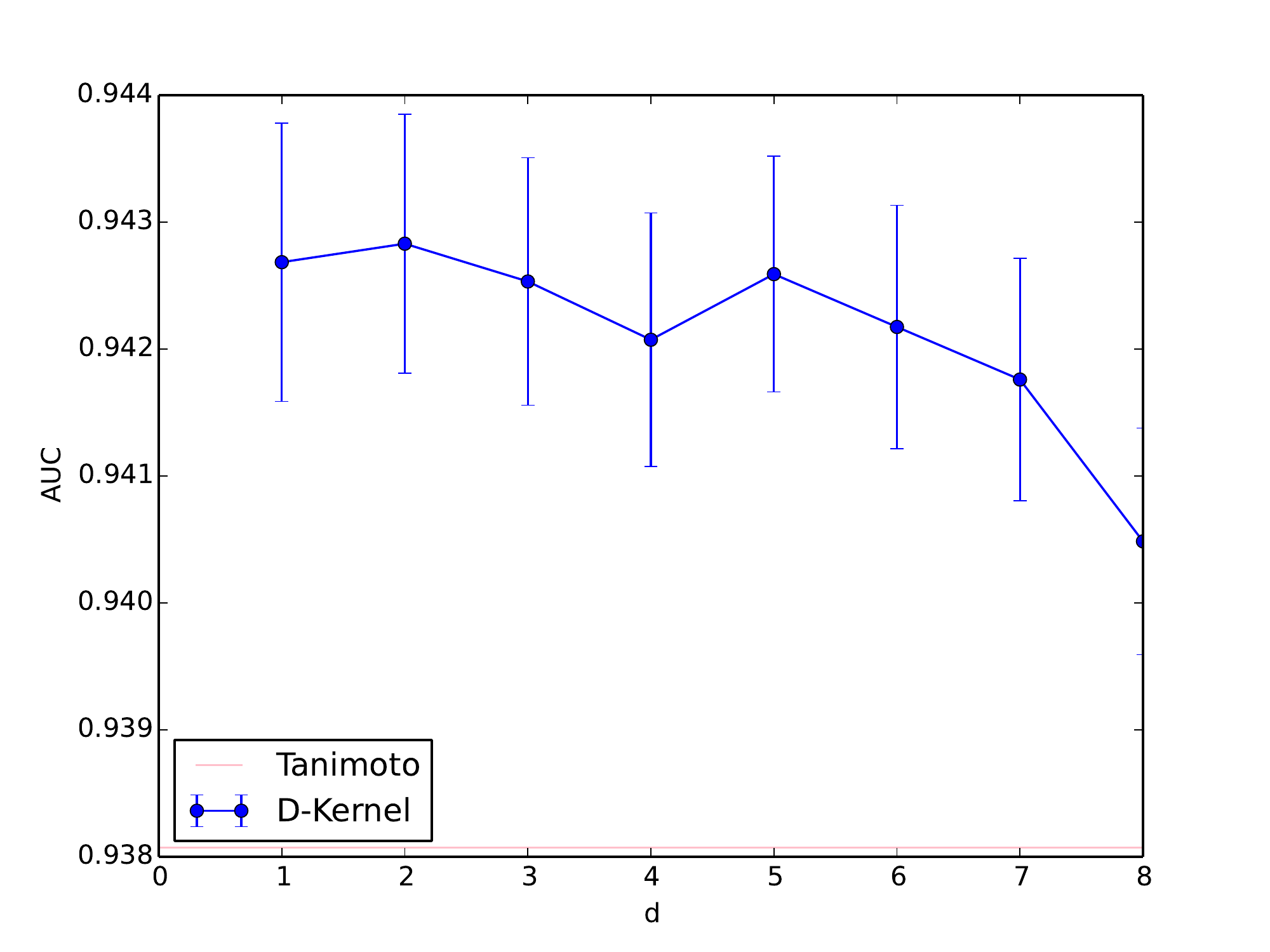}}
\caption{Performance of different D-Kernel degrees.\label{fig:dres}}
\end{figure}

\begin{figure}[h!]
\centering
\subfigure[BookCrossing]
{\includegraphics[width=3.95cm]{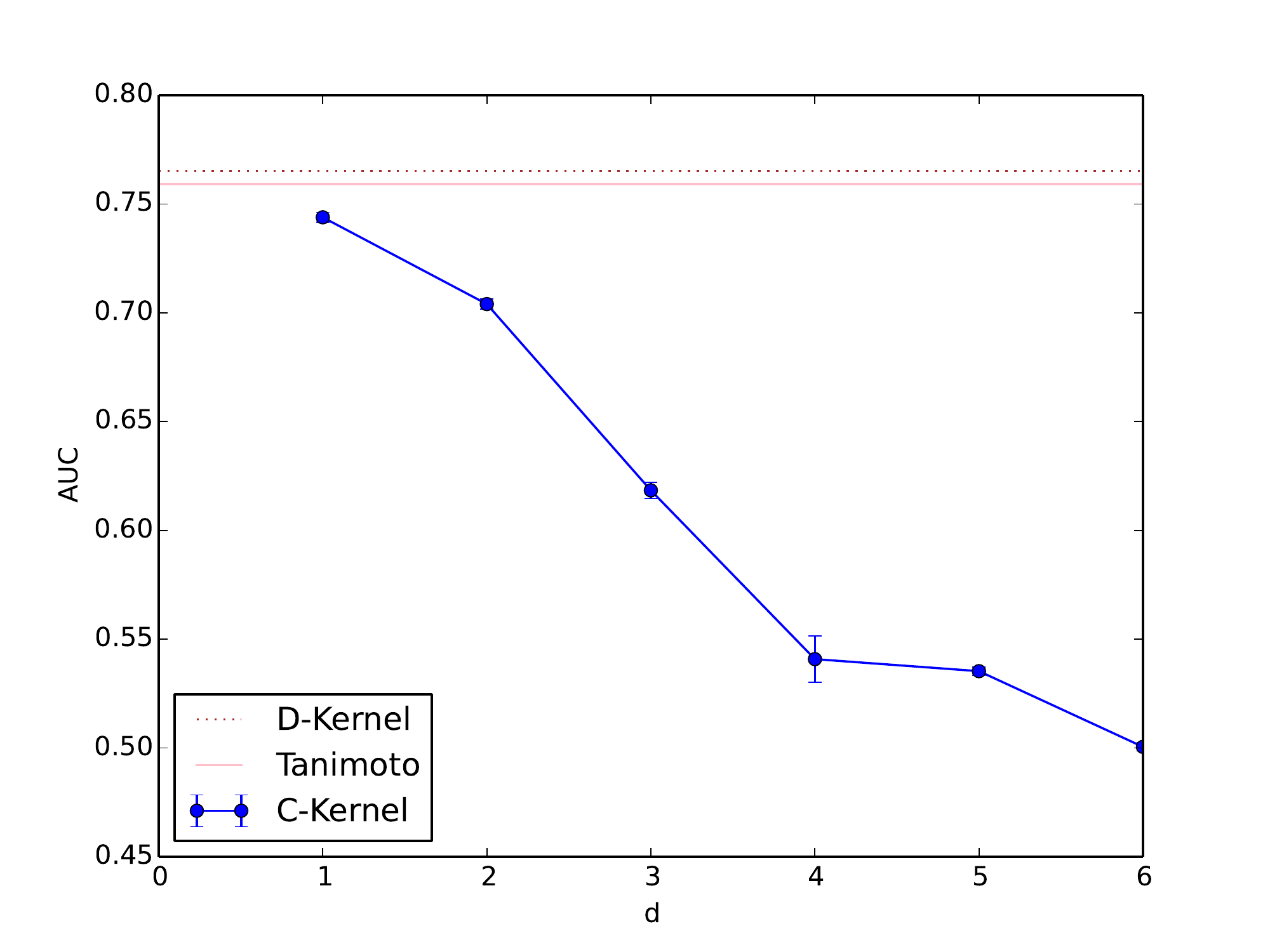}}
\subfigure[Ciao]
{\includegraphics[width=3.95cm]{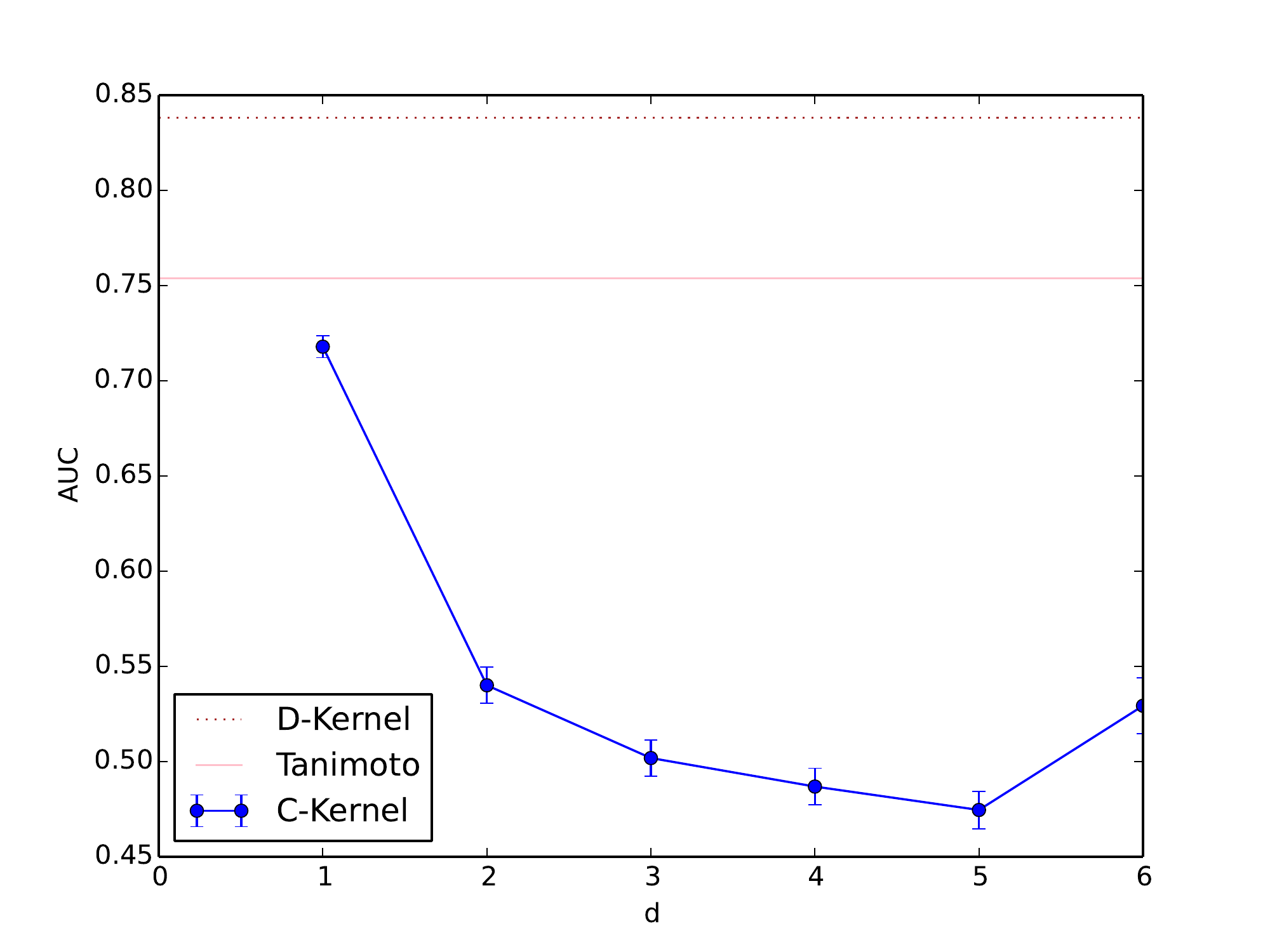}}
\subfigure[Film Trust]
{\includegraphics[width=3.95cm]{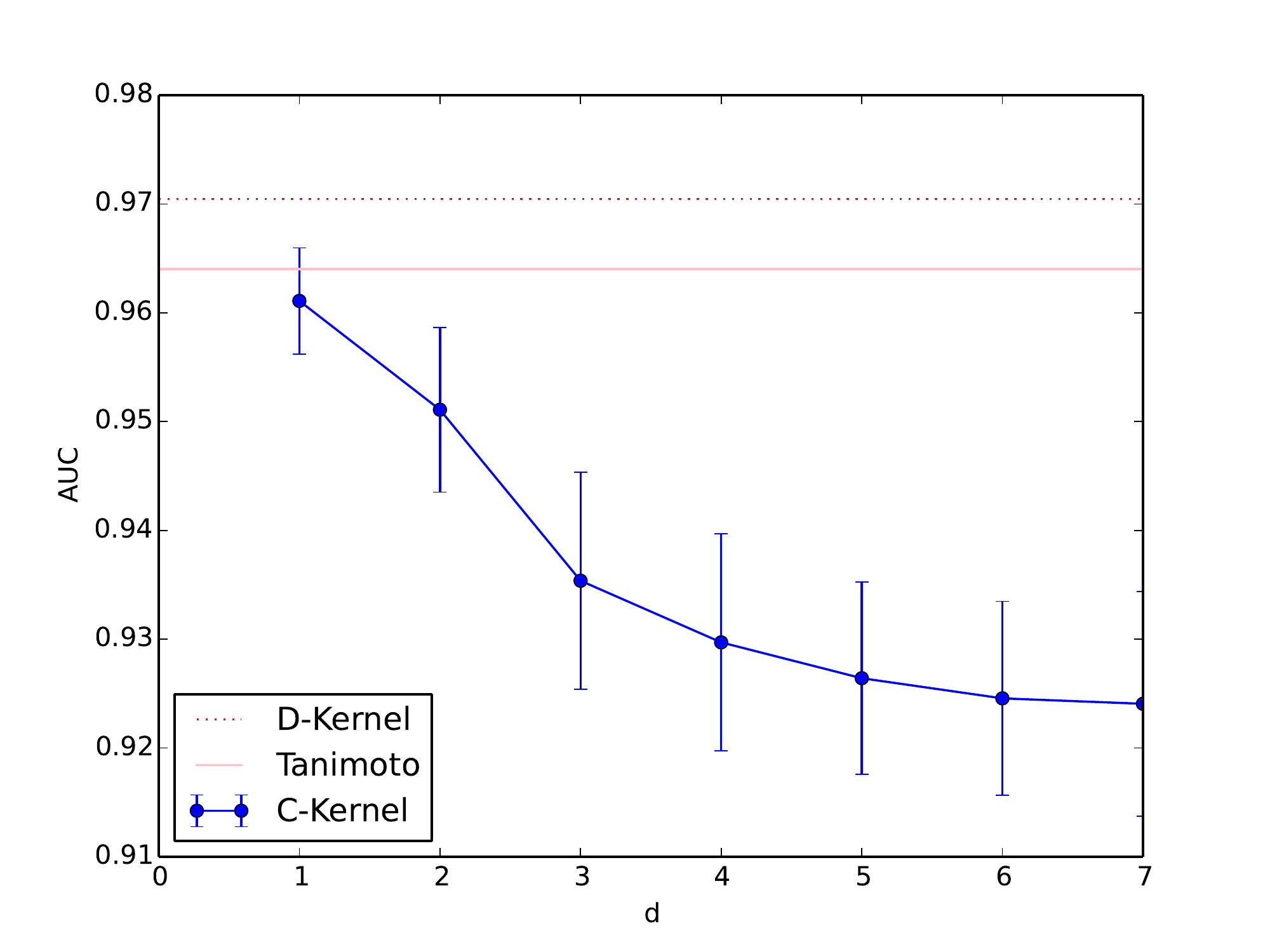}}
\subfigure[Jester]
{\includegraphics[width=3.95cm]{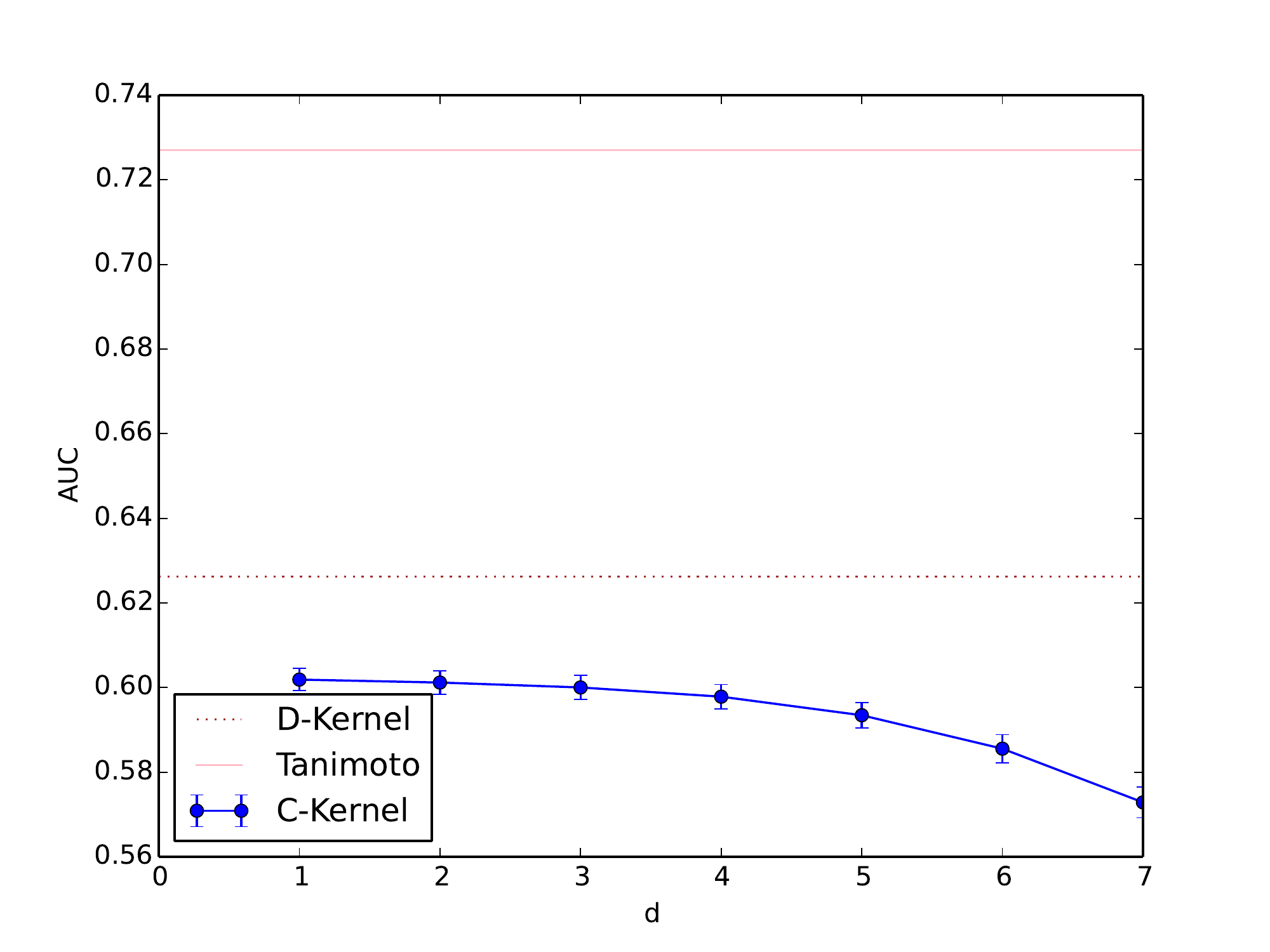}}
\subfigure[MovieLens]
{\includegraphics[width=3.95cm]{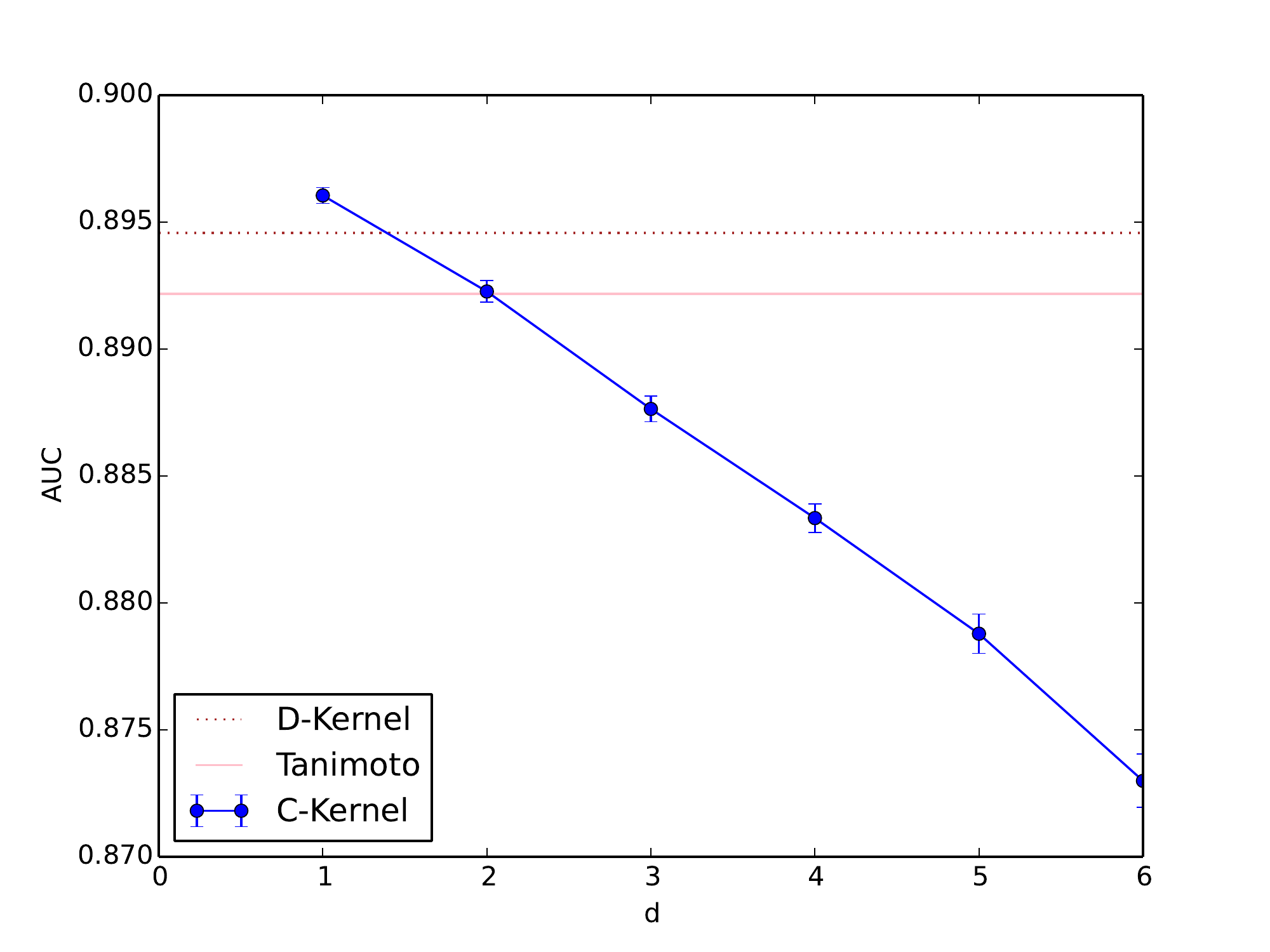}}
\subfigure[Netflix]
{\includegraphics[width=3.95cm]{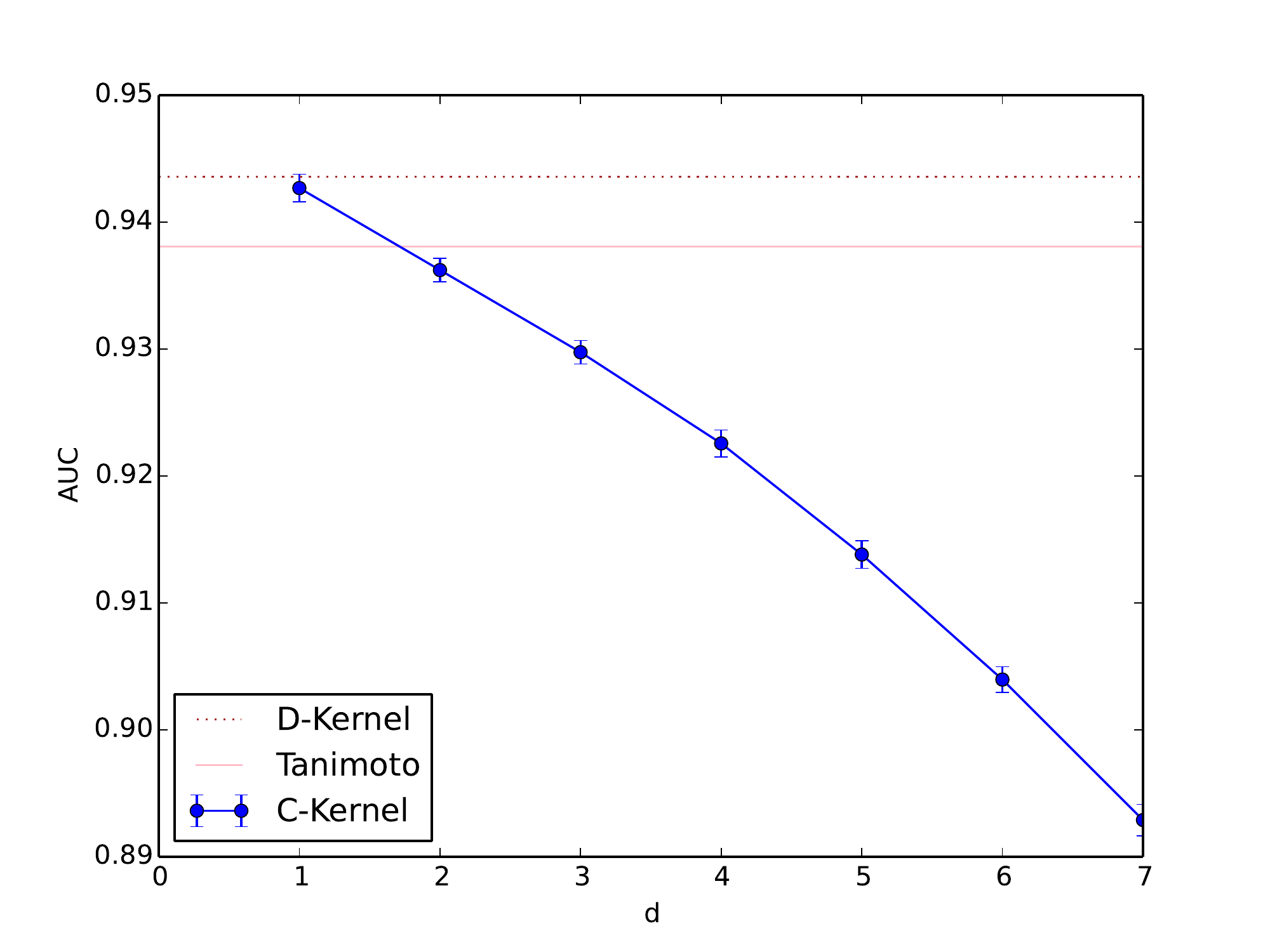}}
\caption{Performance of different C-Kernel arities.\label{fig:cres}}
\end{figure}

\subsection{Spectral ratio}
In Section \ref{sec:exp-ck} and \ref{sec:exp-dk} we have proved how the spectral ratio is linked with the arity  of the C-Kernel and the D-Kernel. Figure \ref{fig:compc} and \ref{fig:compd} confirm our theoretical results.

\begin{figure}[h!]
\centering
\subfigure[BookCrossing]
{\includegraphics[width=6cm]{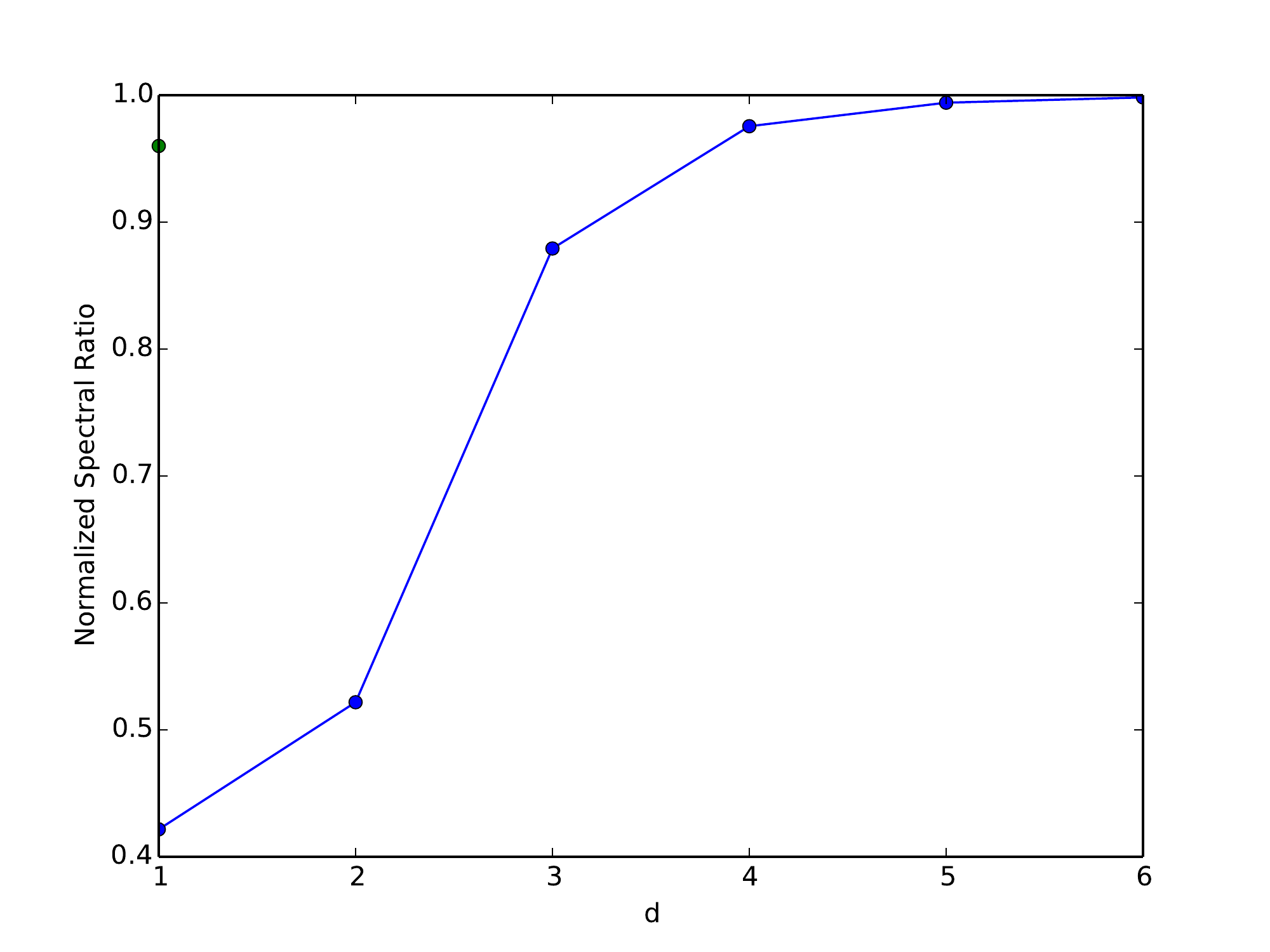}}
\subfigure[Ciao]
{\includegraphics[width=6cm]{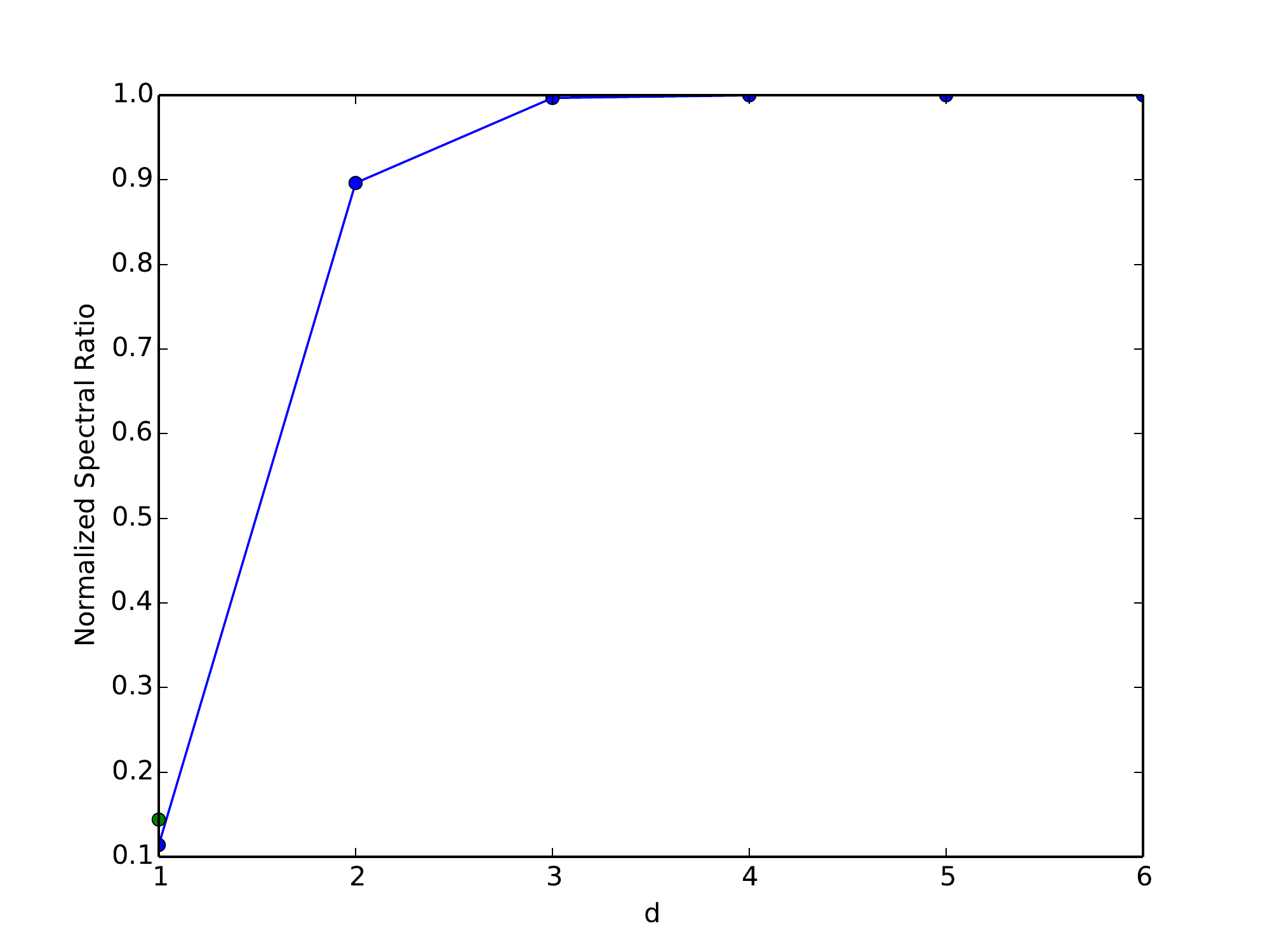}}
\subfigure[Film Trust]
{\includegraphics[width=6cm]{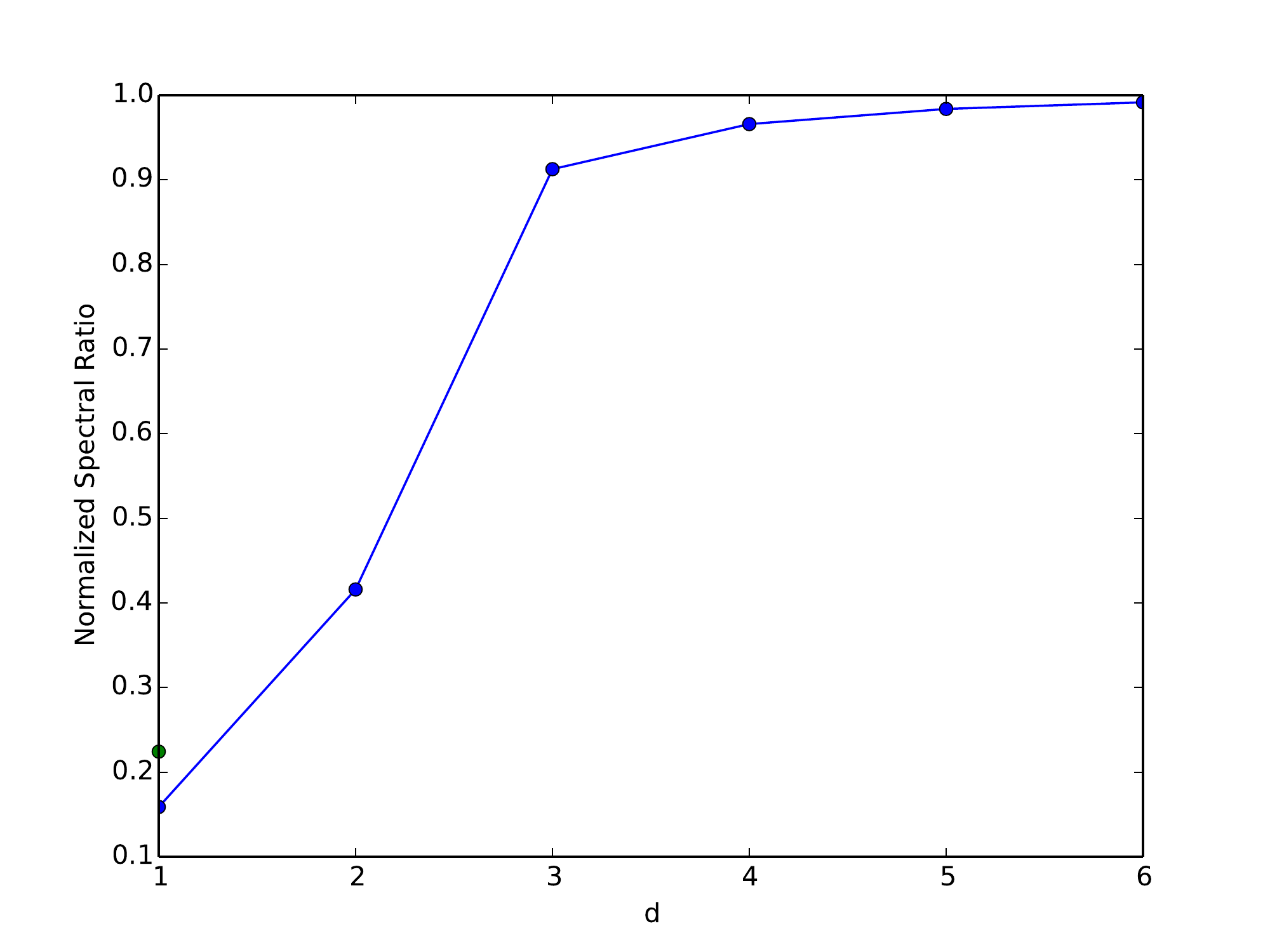}}
\subfigure[Jester]
{\includegraphics[width=6cm]{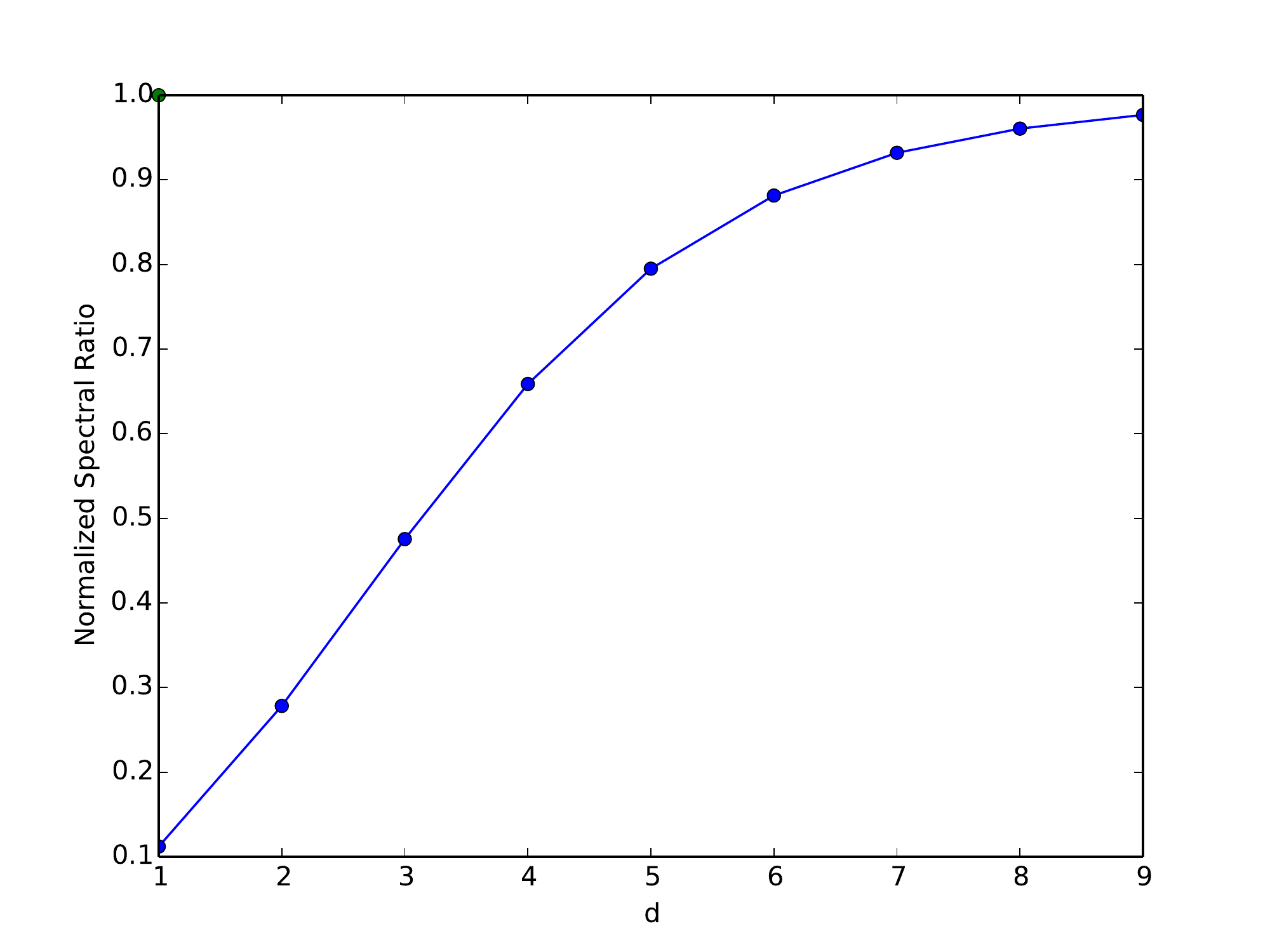}}
\subfigure[MovieLens]
{\includegraphics[width=6cm]{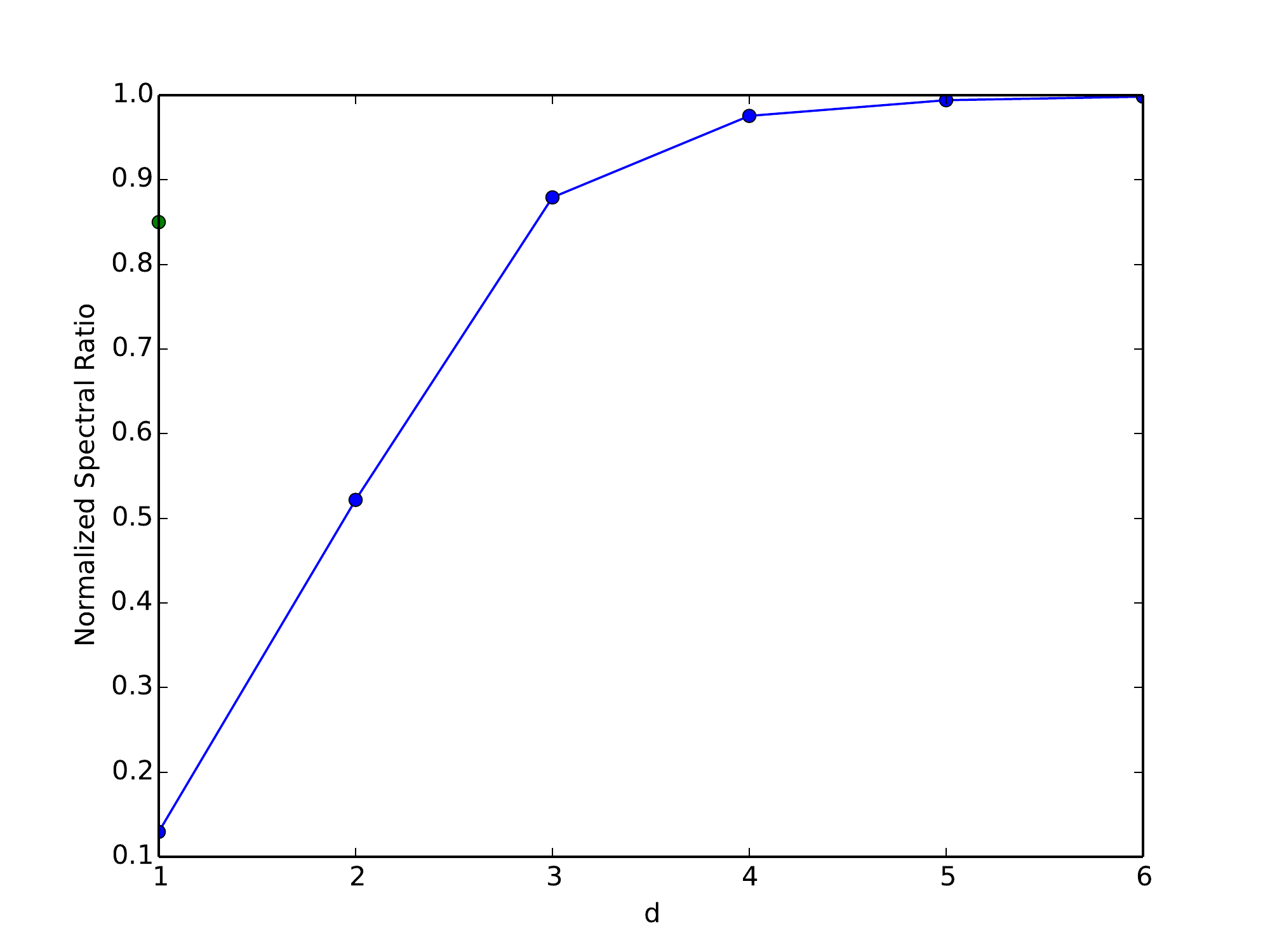}}
\subfigure[Netflix]
{\includegraphics[width=6cm]{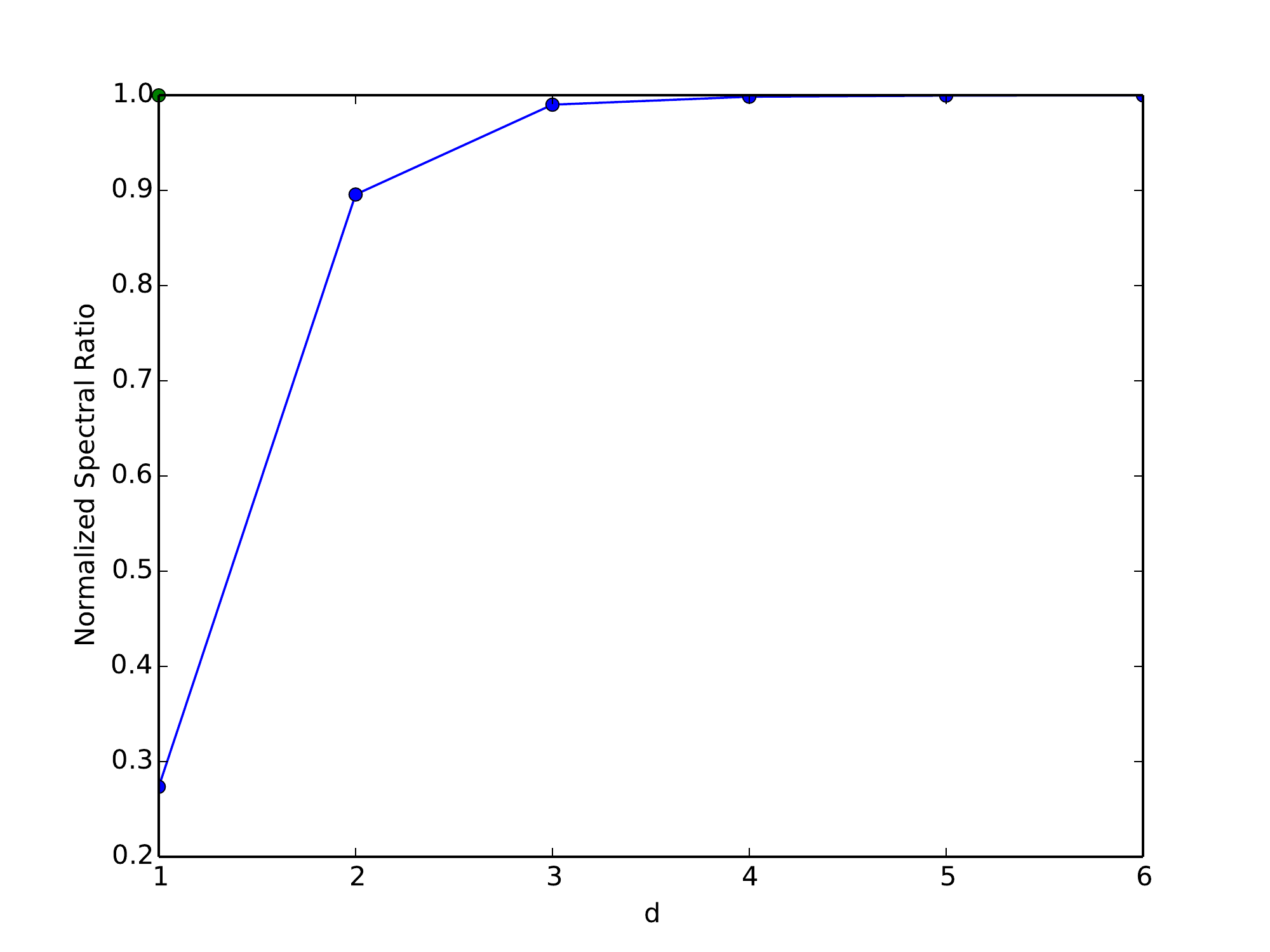}}
\caption{Normalized Spectral Ratio of the C-Kernel on different datasets. On the y-axis is indicated with a green dot the spectral ratio of the mDNF kernel.\label{fig:compc}}
\end{figure}

\begin{figure}[h!]
\centering
\subfigure[BookCrossing]
{\includegraphics[width=6cm]{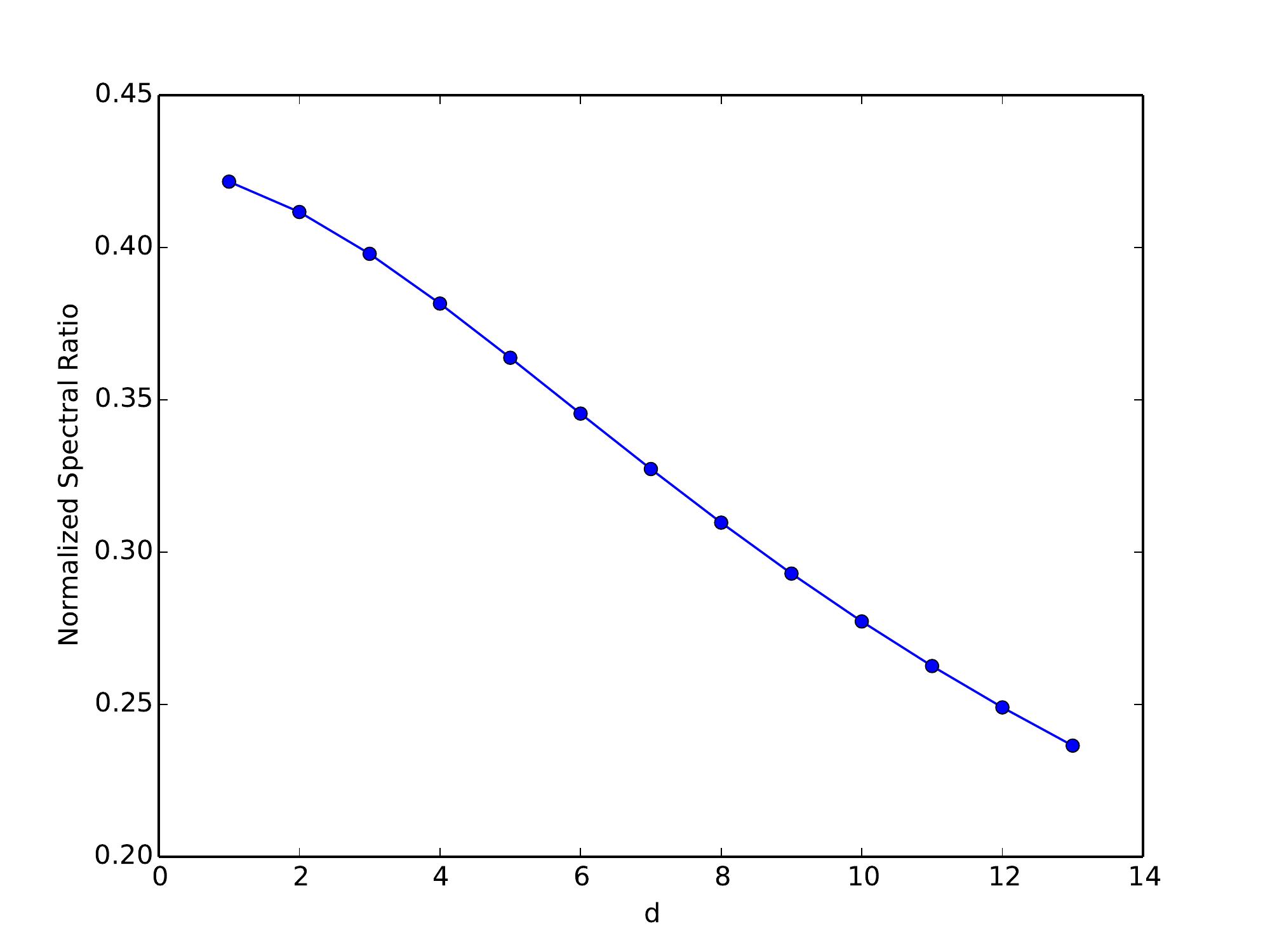}}
\subfigure[Ciao\label{fig:comp-ciao}]
{\includegraphics[width=6cm]{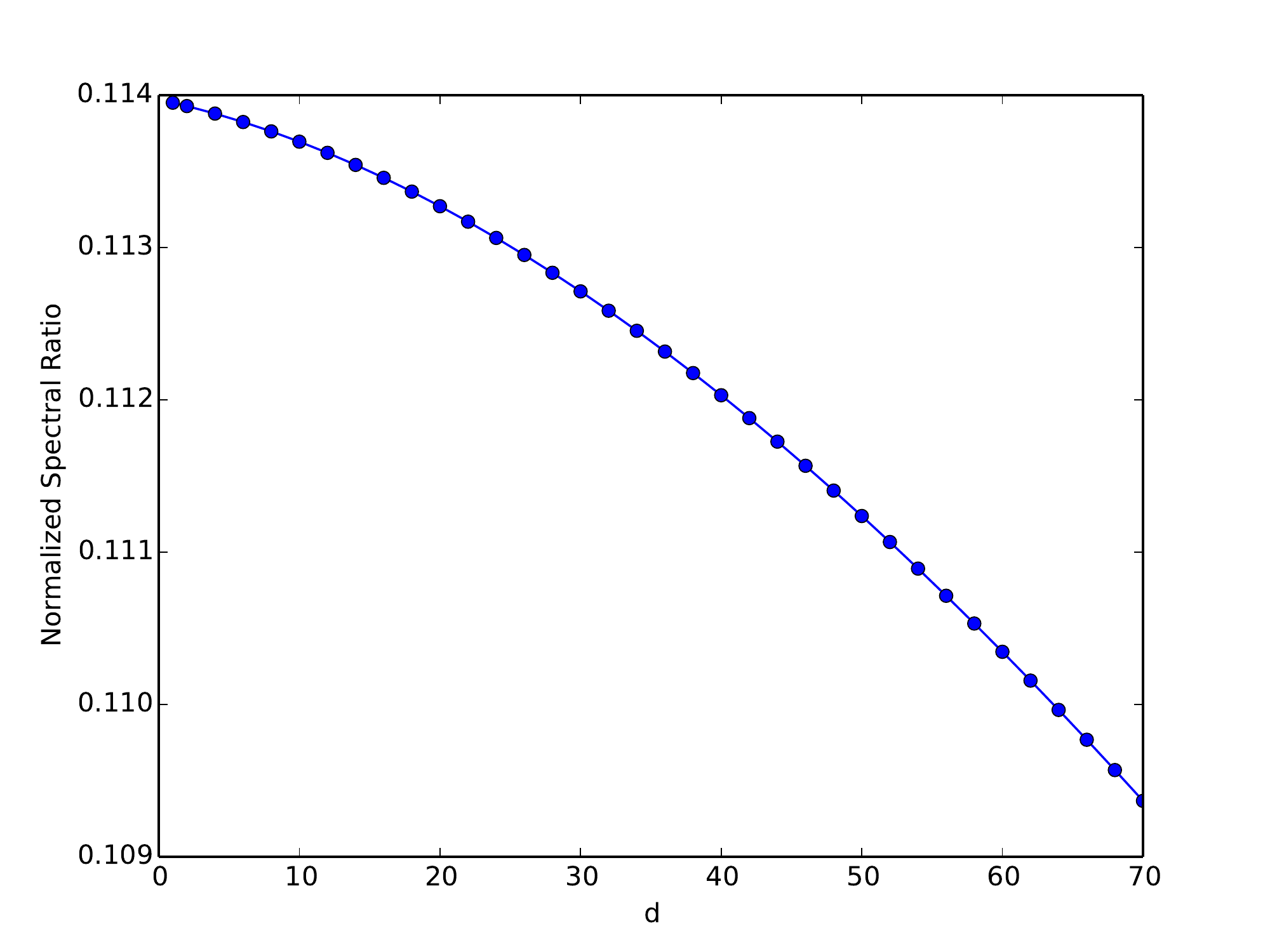}}
\subfigure[Film Trust]
{\includegraphics[width=6cm]{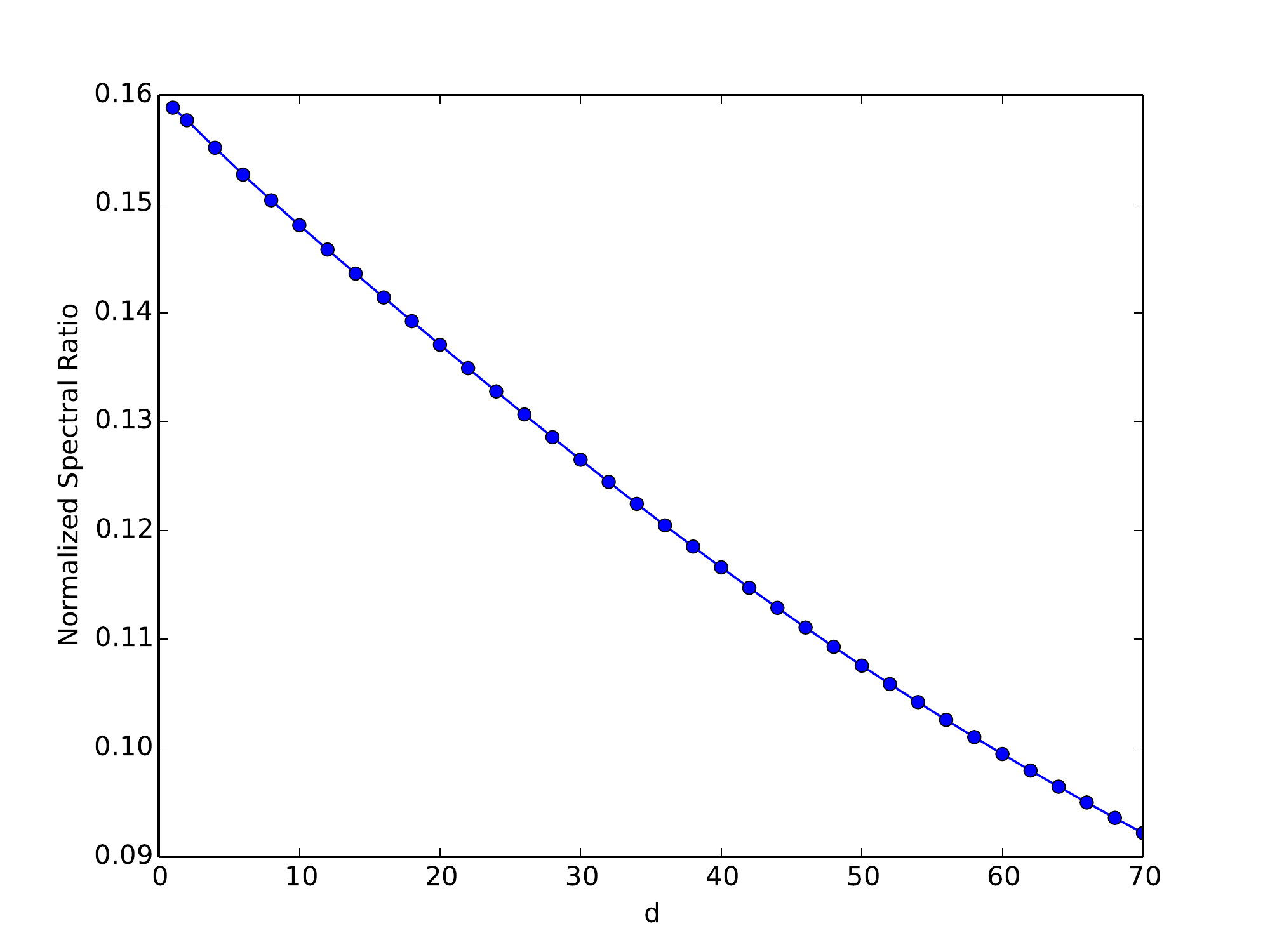}}
\subfigure[Jester]
{\includegraphics[width=6cm]{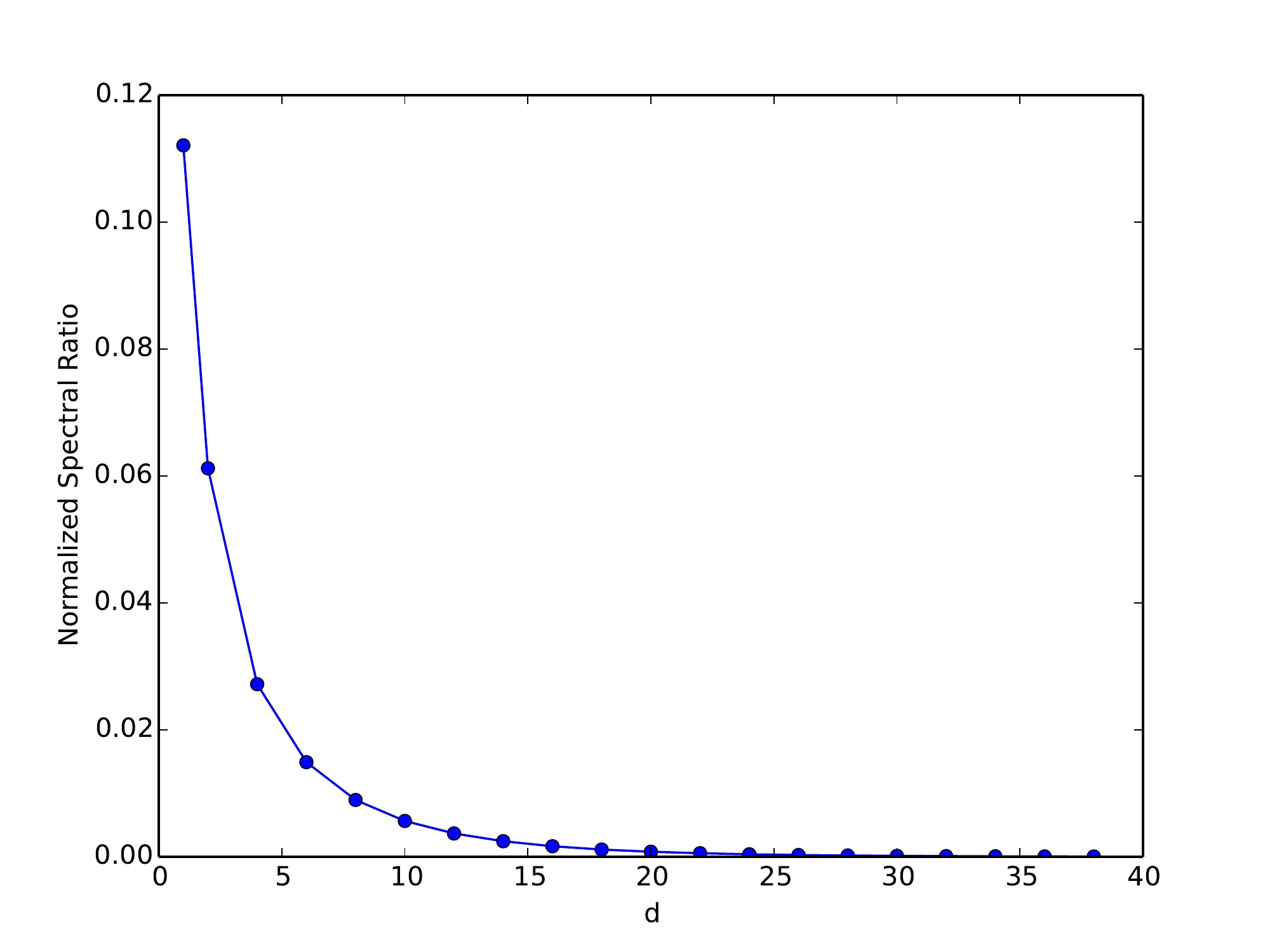}}
\subfigure[MovieLens]
{\includegraphics[width=6cm]{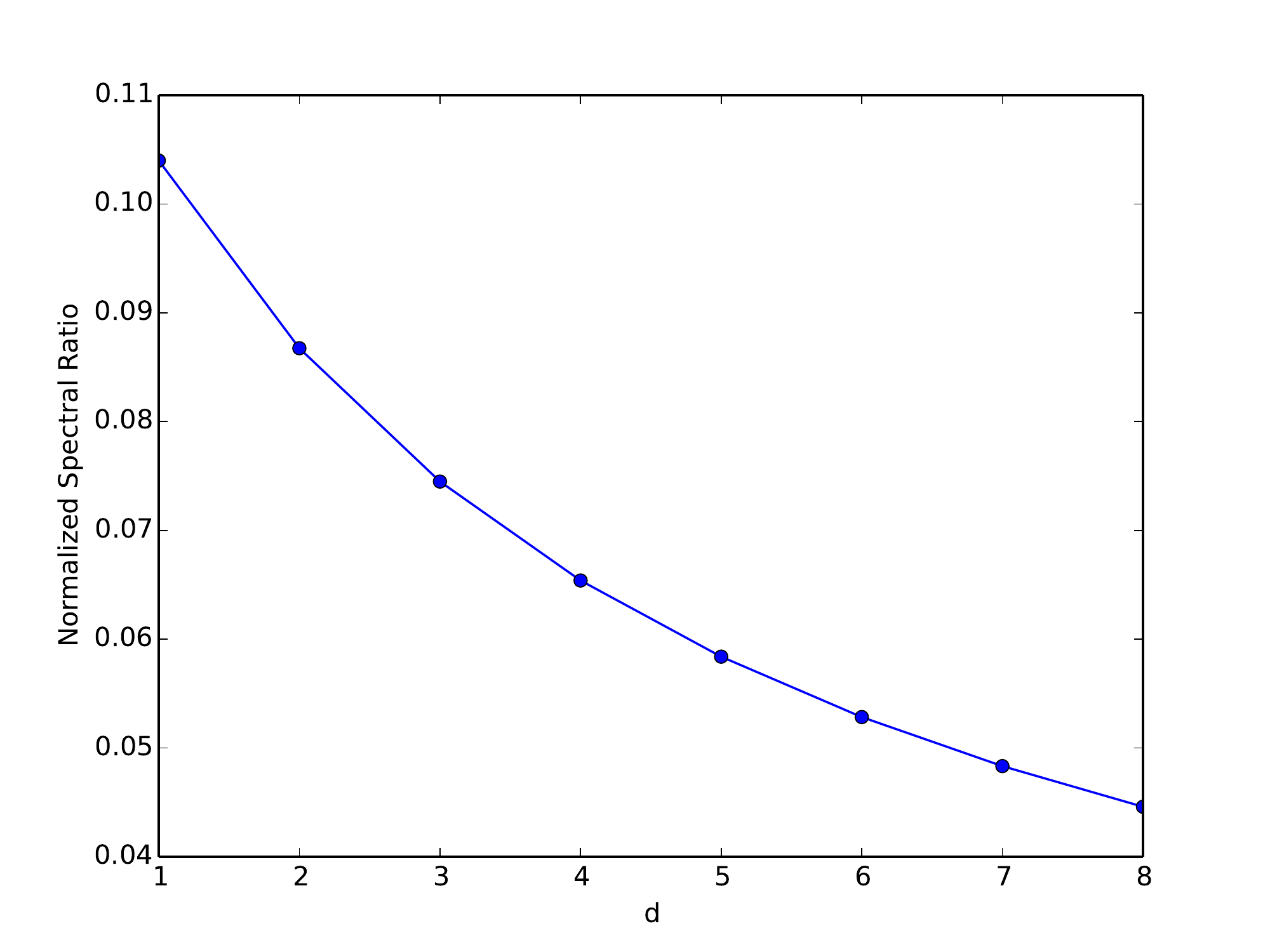}}
\subfigure[Netflix]
{\includegraphics[width=6cm]{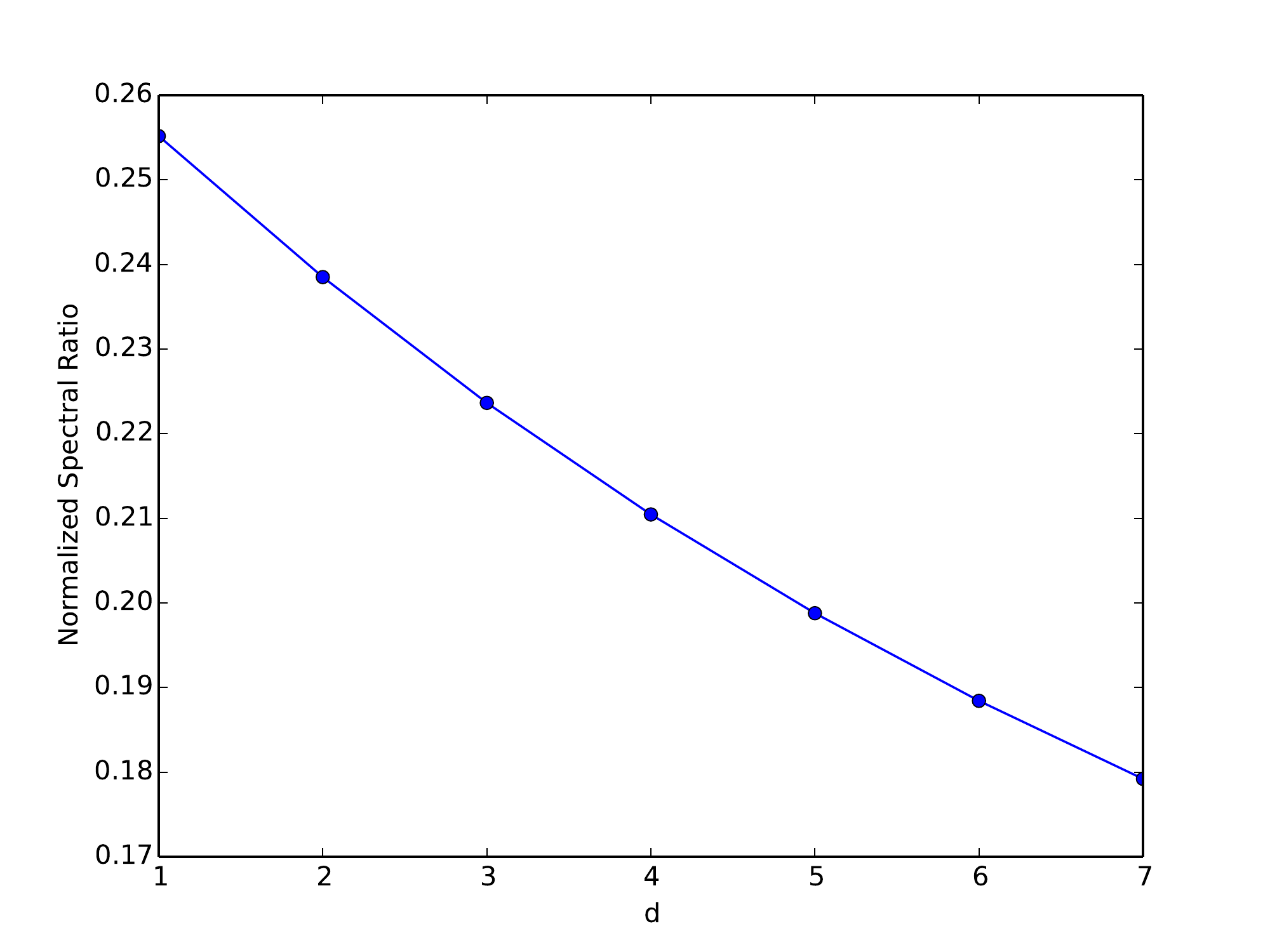}}
\caption{Normalized Spectral Ratio of the D-Kernel on different datasets.\label{fig:compd}}
\end{figure}

The reason for inclusion of the arity 1 (i.e., the linear kernel) in the plots is two-fold:
\begin{enumerate}
	\item it gives a visual idea of how fast/slow the expressiveness of the kernel increases/decreases with respect to the linear one;
	\item for each dataset, it offers a meeting point between the two curves.
\end{enumerate}

Besides supporting our previous results, the plots show that, generally, augmenting the arity, the decrease of expressiveness in D-Kernel is smoother than the increase of the  expressiveness in C-Kernel, which reaches very rapidly values near 1 (so the kernel matrix is similar to the identity). This justifies the fast decay of the C-Kernel' performance on all the datasets.

\subsection{Time complexity}\label{sec:time}
In Section \ref{sec:dkernel} we have demonstrated that the theoretical computational complexity of the D-Kernel is $\mathcal{O}(n+d)$. This complexity concerns only the computation of the kernel function between two binary vectors. 
However, in the calculation of the entire kernel matrix the number of examples (in this case the number of items) plays a huge role. Indeed, Figure \ref{fig:time} shows that all the kernel matrix computations, but the \verb#Ciao#, take similar time. \verb#Ciao# is the only dataset with a particular high number of items and consequently it has a very high number of entries in its kernel matrix. This justifies the difference in the computational time depicted in the figure.
\begin{figure}[h!]
\centering
{\includegraphics[width=5cm]{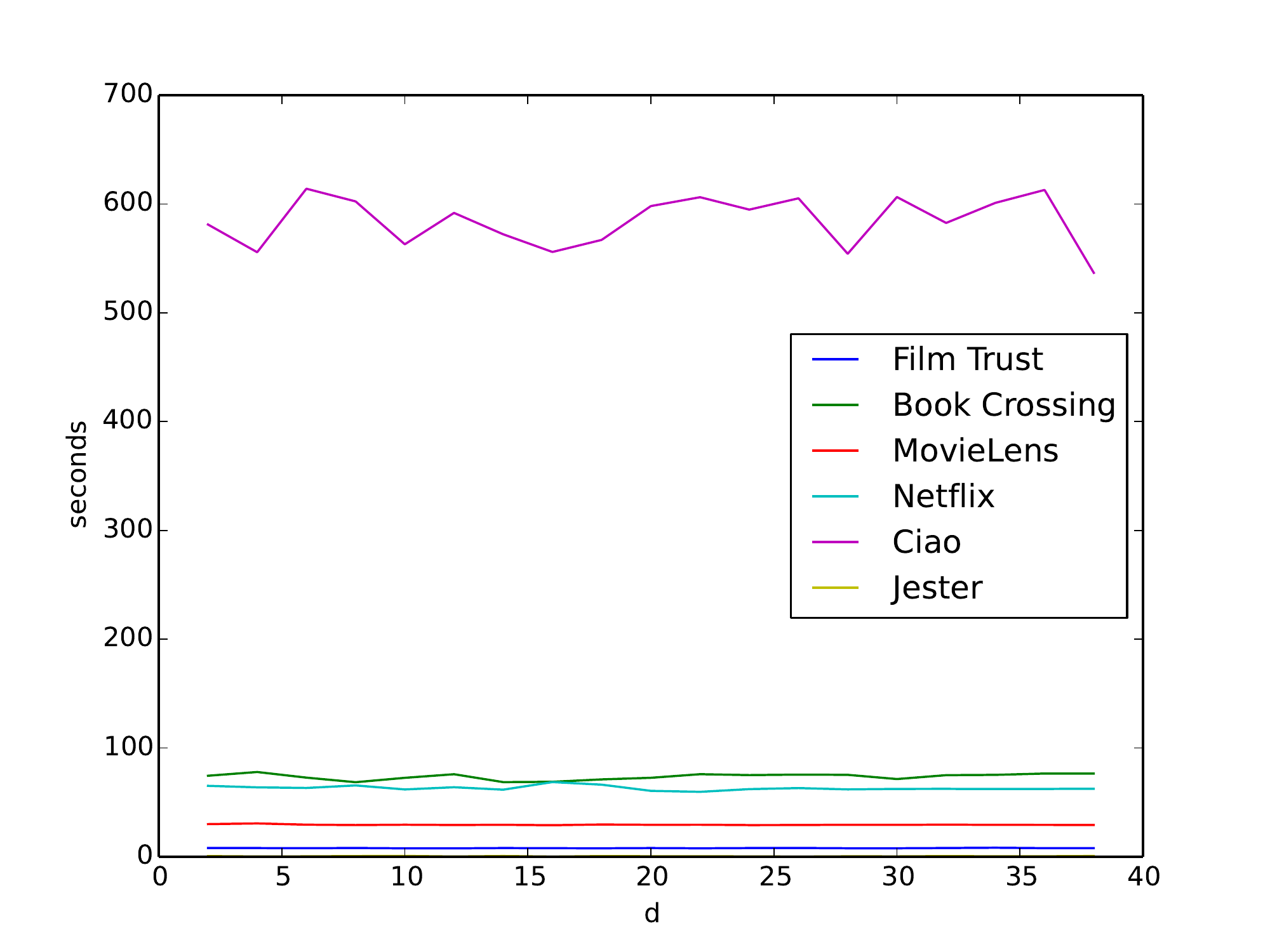}}
\caption{Time, in seconds, for calculating the D-Kernel on different datasets.\label{fig:time}}
\end{figure}

All these experiments have been made on a MacBook Pro late 2012 with 16GB of RAM and CPU Intel\textregistered~ Core i7 @ 2.70GHz.

\section{Conclusion and Future Work}\label{sec:concl}
In this work we have proposed a new boolean kernel, called D-Kernel, able to deal with the sparsity issue in a collaborative filtering context.
We leveraged on the observation made in our previous works \cite{Polato:2016}\cite{Polato:2017} to come up with the idea of creating a data representation less expressive than the linear one in order to mitigate the sparsity and the long tail issues that is common in CF datasets.
We presented a very efficient way for calculating the D-Kernel and we have also demonstrated its properties. The empirical analysis, over many CF datasets, show the effectiveness and the efficiency of the Disjunctive kernel against others, more expressive, boolean kernels. 
Our aim for the future is to enrich this boolean kernel family by introducing a new kernel which represents the bridge between the C-Kernel and the D-Kernel, for example by considering partially satisfied conjunctions of variables. 

\bibliographystyle{elsarticle-num}
\bibliography{library}

\end{document}